\newif\ifbackrefshowonlyfirst
\let\BR@direct@old@hyper@natlinkstart\hyper@natlinkstart
\renewcommand*{\hyper@natlinkstart}{\phantomsection\BR@direct@old@hyper@natlinkstart}
\let\BR@direct@oldBR@citex\BR@citex
\renewcommand*{\BR@citex}{\phantomsection\BR@direct@oldBR@citex}%
\long\def\hyper@page@BR@direct@ref#1#2#3{\hyperlink{#3}{#1}}
    \let\backrefxxx\hyper@page@BR@direct@ref
\patchcmd{\Hy@backout}{Doc-Start}{\@currentHref}{}{\errmessage{I can't seem to patch backref}}
\newenvironment{algobox}[1]
{\begin{center}\begin{minipage}{#1\textwidth}\hrulefill\\}
{\vspace{-8pt}\hrulefill\end{minipage}\end{center}}
\newcommand{\alg}[1]{\textbf{#1}}
\newcommand{\QFTPermModtxt}{\alg{QFTPermMod}}      \newcommand{\QFTPermMod}{\hyperref[alg:QFTPermMod]{\QFTPermModtxt}}
\newcommand{\DualSchurtxt}{\alg{DualSchur}} \newcommand{\DualSchur}{\hyperref[alg:DualSchur]{\DualSchurtxt}}
\newcommand{\GPEtxt}{\alg{GPE}} \newcommand{\GPE}{\hyperref[alg:GPE]{\GPEtxt}}
\newcommand{\captionfonts}{\small}
\long\def\@makecaption#1#2{%
  \vskip\abovecaptionskip
  \sbox\@tempboxa{{\captionfonts #1: #2}}%
  \ifdim \wd\@tempboxa >\hsize
    {\captionfonts #1: #2\par}
  \else
    \hbox to\hsize{\hfil\box\@tempboxa\hfil}%
  \fi
  \vskip\belowcaptionskip}
\colorlet{darkblue}{blue!70!black}
\colorlet{darkred}{red!70!black}
\colorlet{lightgrey}{black!25}
\colorlet{grey}{black!50}
\definecolor{webgreen}{rgb}{0,.5,0}
\definecolor{webblue}{rgb}{0,0,.5}
\numberwithin{equation}{section}
\newtheorem{theorem}{Theorem}
\newtheorem{lemma}[theorem]{Lemma}
\newcommand{\remove}[1]{}
\newcommand{\C}{\mathbb{C}}
\newcommand{\U}{\textsf{U}}
\newcommand{\ket}[1]{|#1\rangle}
\newcommand{\bra}[1]{\langle #1|}
\newcommand{\SU}{\textsf{SU}}
\newcommand{\Ind}[2]{\operatorname{\uparrow}_{#1}^{#2}}
\newcommand{\triv}{\mathbf{1}}
\newcommand{\be}{\begin{equation}}
\newcommand{\ee}{\end{equation}}
\begin{document}

\title{An efficient high dimensional quantum Schur transform}
\author{Hari Krovi}
\orcid{0000-0001-9675-9959}
\email{hkrovi@gmail.com, hari.krovi@raytheon.com} 
\affil[]{Quantum Engineering and Computing\\Physical Sciences and Systems\\Raytheon BBN Technologies, Cambridge, MA}

\maketitle

\begin{abstract}
The Schur transform is a unitary operator that block diagonalizes the action of the symmetric and unitary groups on an $n$ fold tensor product $V^{\otimes n}$ of a vector space $V$ of dimension $d$. Bacon, Chuang and Harrow \cite{BCH07} gave a quantum algorithm for this transform that is polynomial in $n$, $d$ and $\log\epsilon^{-1}$, where $\epsilon$ is the precision. In a footnote in Harrow's thesis \cite{H05}, a brief description of how to make the algorithm of \cite{BCH07} polynomial in $\log d$ is given using the unitary group representation theory (however, this has not been explained in detail anywhere\footnote{Specifically, Harrow sketches a procedure to obtain the same labeling of irreps as the ones used for Gelfand-Tsetlin basis. Why this procedure leads to the Gelfand-Tsetlin basis itself (not just the same labeling scheme) could be explained in more detail.}). In this article, we present a quantum algorithm for the Schur transform that is polynomial in $n$, $\log d$ and $\log\epsilon^{-1}$ using a different approach. Specifically, we build this transform using the representation theory of the symmetric group and in this sense our technique can be considered a ``dual" algorithm to \cite{BCH07}. A novel feature of our algorithm is that we construct the quantum Fourier transform over the so called \emph{permutation modules}, which could have other applications.
\end{abstract}


\section{Introduction}
Schur-Weyl duality is a remarkable correspondence between the irreducible representations of the symmetric group and those of the unitary group acting on an $n$ fold tensor product of a vector space $V$. This correspondence allows one to construct all of the so-called polynomial representations of the unitary, general linear and special linear groups. Polynomial representations of matrix groups such as unitary groups are representations whose matrix entries can be written as polynomials in the entries of the group element i.e., $\rho(U)$ is a polynomial representation if the entries of $\rho(U)$ are polynomial in the entries of $U$. Schur-Weyl duality has been generalized to many other groups and algebras including quantum groups \cite{dipper2008brauer, Halverson}.

Schur-Weyl duality has numerous applications in quantum information theory. It has been used to prove that the tensor product of many copies of a density operator is close to a projector. In fact, the projector is the one corresponding to a partition in Schur-Weyl duality that is closest to the spectrum of $\rho$ \cite{ARS88,KW01,HM02a, CM06}. It has also been used to prove de Finetti theorems \cite{KM09}, which have many applications in security proofs of quantum key distribution systems. The Schur transform was first constructed for qubits in the work of Bacon, Chuang and Harrow in \cite{BCH06}. This has been extended to qudits by the same authors in \cite{BCH07}. A quantum circuit for the Schur transform also has numerous applications in quantum information theory. It has been applied to universal distortion-free entanglement concentration \cite{HM07}, universal compression \cite{HM02a,HM02b}, encoding and decoding into decoherence-free subspaces \cite{ZR97, KLV00, KBLW01, Bac01}. These applications and others are discussed in more detail in Harrow's thesis \cite{H05}. Recently, the Schur transform has been used as a primitive in an efficient algorithm for spectrum testing of a density operator \cite{OW15a} and in algorithms for sample optimal state tomography \cite{OW15b,HHJW15} improving on previous algorithms \cite{VLPT99,GM02,Hay02b,Key04}. It is also used in a scheme for optimal compression \cite{YCH16}, quantum superreplication \cite{CY16,CYH15}. Recently, in \cite{HS18}, the authors present a classical algorithm to compute the transition amplitudes of the Schur sampling circuits.

There are two main ways in which Schur sampling is used: weak and strong sampling. Recall that with any kind of Fourier sampling, we block diagonalize the group action to obtain the Fourier basis from the computational basis. Strong (resp. weak) sampling refers to measuring all the registers (resp. only the label of the irreducible representation) to obtain information from the Fourier basis. Similarly, a circuit for Schur sampling block diagonalizes the unitary group (or equivalently the symmetric group) representation on the $n$ fold tensor power of a $d$ dimensional space. In this block diagonalization, each block essentially contains three pieces of information: (a) the irrep label, which is common to both the unitary and symmetric groups, (b) the unitary group irrep register i.e., the register holding the state that transforms under the unitary group and (c) the symmetric group irrep register holding the state that transforms under the symmetric group. Strong Schur sampling refers to measuring all three registers and weak Schur sampling refers to measuring only the irrep label. One could always measure all three registers in some basis to obtain more information about the state being transformed. However, in most of the above applications, one only needs to measure the irrep label register. In other words, one only needs weak Schur sampling and the associated probability distribution over the irrep labels to get enough information for all of the above mentioned applications. Weak Schur sampling can be performed with a quantum circuit that is polynomial in $\log d, n$ and $\log(1/\epsilon)$ using the so-called Generalized Phase Estimation (GPE) procedure. The construction of such a circuit for the weak Schur transform is described in \cite{H05}.

In this paper, we present a circuit that can be used for strong Schur sampling using the representation theory of the symmetric group that runs in time polynomial in $\log d, n$ and $\log(1/\epsilon)$. In the previous approach by Bacon, Chuang and Harrow (BCH) \cite{BCH07}, the Schur transform is constructed using the unitary group representation theory. By contrast, our algorithm uses the symmetric group and, in fact, uses transforms such as Beals' algorithm for the Fourier transform over the symmetric group \cite{Beals97} and the GPE algorithm mentioned above. To give an intuition for how the dual Schur transform works, we should look at how the original algorithm by BCH for the Schur transform works. BCH developed a quantum circuit for the Clebsch-Gordan (CG) decomposition problem for the unitary group, which entails block diagonalizing a tensor product of two irreps of the unitary group. Then they use this circuit to construct the Schur transform by applying it iteratively. 

This gives us a clue as to how to construct the dual transform if we look at the so called Littlewood-Richardson (LR) coefficients. The LR coefficients give the number of times an irrep of the unitary group appears in the CG decomposition of two unitary group irreps. The same coefficients describe the decomposition of induced representations of the symmetric group from certain Young subgroups. This suggests that we need to investigate permutation modules of the symmetric group since they are exactly such induced representations. We can now see that block diagonalizing these induced representations gives us the dual Schur transform. We will not directly work with LR coefficients or the related Kostka numbers, but embed them in a larger space. This dual version is polynomial in $\log d, n$ and $\log(1/\epsilon)$, where as the one in \cite{BCH07} is polynomial in $d, n$ and $\log(1/\epsilon)$. However, as mentioned above, in Harrow's thesis \cite{H05}, a short description is given of how to make the algorithm from \cite{BCH07} polynomial in $\log d, n$ and $\log(1/\epsilon)$. A classical algorithm to compute the CG coefficients for \SU(d) is given in \cite{CG_coeff}.

This paper is organized as follows. In Section \ref{Sec:Rep_theory_background}, we explain the necessary background in representation theory, namely the notions of induced representations, symmetric and unitary group representations and the structure of permutation modules of the symmetric group. In Section \ref{Sec:QFTs}, we describe the construction of quantum Fourier transforms over induced representations. This construction was also used in \cite{KR15} for Fourier transforms over quantum doubles over finite groups. Then, we apply this QFT for permutation modules which are essentially induced representations. In Section \ref{Sec:Dual_Schur_transform}, we construct the dual algorithm for the Schur transform and show that the construction is O(poly$(n,\log d, \log 1/\epsilon))$ elementary operations. Finally, in Section \ref{Sec:Conclusions}, we present some conclusions and other potential applications of our subroutines.

\section{Background in representation theory}\label{Sec:Rep_theory_background}
\subsection{Basics of induced representations}
In this section, we briefly describe the representation theoretic concepts such as irreducible representations (irreps, for short), regular representations and induced representations. Induced representations are important in this article since the dual Schur transform is essentially a block diagonalization of induced representations. These concepts for finite groups are described in several texts such as \cite{Serre77, FH91}. Here we follow the development in \cite{FH91}.

A representation of a finite group on a finite dimensional complex Hilbert space $V$ is a homomorphism from the group $G$ to the unitary group on the vector space $\U(V)$ i.e., a representation is $\rho: G \rightarrow \U(V)$. For every finite or compact group, any representation on $V$ can be made unitary i.e., $\rho(g)$ is a unitary matrix for all $g\in G$. Very often, the space which the representation $\rho$ maps to, is identified with the representation. Two representations $\rho$ and $\rho^\prime$ of a group $G$ acting on the same vector space $V$ are considered equivalent if there exists a unitary $U$ such that $U\rho(g)U^\dag = \rho^\prime(g)$ for all $g\in G$. A subspace $W$ of $V$ is called a sub-representation if $\rho(g)$ preserves $W$ for all $g\in G$. In this case, the orthogonal complement of $W$ in $V$ is also a sub-representation and $V$ can be viewed as a direct sum of these two sub-representations. A representation is called an irreducible representation (irrep) if it does not contain any non-trivial sub-representations. Any representation $V$ can be broken up into a direct sum of sub-representations $W$ and its complement as above. Continuing this process further and breaking up $W$ and its complement into sub-representations, one can arrive at a decomposition of $V$ into irreps: $V\cong V_1\oplus \dots \oplus V_n$, where some of the irreps in the decomposition may be equivalent. A special kind of irrep is the trivial irrep which acts on a one-dimensional vector space and takes all the group elements to the identity. Any finite group $G$ has a finite number of irreducible representations whose number is equal to the number of conjugacy classes of $G$. 

A regular representation of a finite group $G$ acts on the vector space $\C[G]$, where $g$ acts on any basis vector $h$ by left multiplication $g:h\rightarrow gh$. The regular representation turns out to have the following interesting direct sum decomposition into irreps.
\begin{equation}
\label{Eq:regular-rep}
\C[G] \cong \bigoplus_{i} W_i^{\oplus d_i}\,,
\end{equation}
where $W_i$ is an irrep of $G$, $i$ runs over all the different irreps of $G$ and $d_i$ is the vector space dimension of the irrep $W_i$. The quantum Fourier transform (QFT) over a group $G$ usually refers to the transform that performs the above block diagonalization. This can be defined as the following basis transformation.
\[
\ket{g}\rightarrow \ket{\rho,i,j}\,,
\]
where $\rho$ is the label of the irrep, $i$ is the multiplicity space index and $j$ is the irrep space index. From the above decomposition, we can see that the dimension of both the multiplicity space and irrep space are the same and so $i$ and $j$ run over the same index set labeling the basis vectors.

A type of representation that is of particular importance here is the induced representation defined as follows. Given a subgroup $H$ of $G$ and a representation $(\rho,W)$ of $H$, we can construct a representation $V$ of $G$ as follows. As a vector space, it is the tensor product $\C[G/H]\otimes W$, where $\C[G/H]$ is the space of cosets of $H$ in $G$. The action of $G$ on this basis can be described using a transversal $G/H = \{t_1,t_2,\dots ,t_m\}$ for $H$ in $G$, where $m$ is the number of cosets i.e., $m=|G|/|H|$. This means that these elements form an orthonormal basis of the vector space $\C[G/H]$. Let $\{\ket{w_1},\dots ,\ket{w_d}\}$ be a basis of $W$. We denote the induced representation by $\Ind{H}{G}\rho$ or $\Ind{}{G}\rho$ (when $H$ can be inferred). In this basis, $(\Ind{H}{G} \rho)(g)$ is the following action on basis vectors (which can be linearly extended to other vectors).
\begin{equation}
\label{eq:induced-representation}
(\Ind{H}{G} \rho)(g):\ket{t} \otimes \ket{w_i} \mapsto \ket{t'} \otimes \rho(h)\ket{w_i}
\end{equation}
where
$t' \in G/H$ and $h \in H$ are the unique elements for which $gt = t'h$.

\subsection{Irreducible representations of symmetric and unitary groups}
The symmetric group on $n$ letters is denoted $S_n$ and consists of all possible permutations of the $n$ letters. There are $n!$ permutations and every element can be written as a product of transpositions, where a transposition is a swap of two letters. Here we denote a transposition between $a$ and $b$ as $(a,b)$. The representation theory of the symmetric group is discussed in several books (see for example, \cite{FH91, JK81, Sag13}). The irreducible representations of $S_n$ are labeled by Young diagrams, which are diagrams that consist of rows of boxes. A Young diagram corresponds to a partition of $n$, which is defined as a tuple $\lambda=(\lambda_1,\lambda_2,\dots \lambda_k)$, where $\lambda_j\geq 0$, $\sum_j \lambda_j=n$ and $\lambda_k\geq \lambda_l$ for $k<l$. Given a partition, a Young diagram has $k$ rows and $i_j$ boxes in row $j$. In this case, we say that there are $k$ parts in the partition $\lambda$. For example, a Young diagram corresponding to the partition  $(2,1)$ (not to be confused with a transposition) is given by ${\Yboxdim{13pt}\Yvcentermath1\yng(2,1)}$. A Young tableau is a Young diagram with numbers in the boxes. If the numbers are from $1$ to $n$, increasing from left to right and increasing from top to bottom, then the Young tableau is called a standard Young tableau (SYT). For example, for the partition $(2,1)$, an SYT could be ${\Yboxdim{13pt}\Yvcentermath1\young(13,2)}$ or ${\Yboxdim{13pt}\Yvcentermath1\young(12,3)}$. In fact, these are the only possible choices. This reflects the fact that the irrep labeled by this Young diagram is two dimensional. In general, the dimension of the irrep of $S_n$ corresponding to a partition $\lambda$ is the number of possible standard Young tableau. It is also given by the famous hook length formula
\begin{equation}
d_\lambda = \frac{n!}{\Pi_{i,j} \, h_\lambda (i,j)}\,,
\end{equation}
where the product in the denominator is over all boxes $(i,j)$ and $h_\lambda(i,j)$ is the hook length of a box, which is defined as the sum of the number of boxes to the right (including the box $(i,j)$) and the number of boxes directly below the given box $(i,j)$. We will get back to other aspects of the symmetric group when we discuss subgroup adapted bases and permutation modules in the next section.

The unitary group $\U(d)$ is the group of $d\times d$ unitary matrices that is an infinite, though compact, group. The irreducible representations of this group can be labeled in several ways. Here, we describe the labeling using both Dynkin labels and Young diagrams. A Dynkin label is the set of coefficients in the so called basis of fundamental weights. Every irreducible representation has a basis whose vectors are called weight vectors and have weights associated with them. In this basis, the highest (and lowest) weight vectors are unique and, in fact, every irrep of the unitary group can be associated to a unique highest weight (and there is a one-to-one correspondence between them). It turns out that the weight vectors lie in a space (called the root space) spanned by the fundamental weights. This means that every highest weight (of any irrep) can be written as a linear combination of the fundamental weights. The weights and weight vectors of a given irrep form a special basis of the irrep of the unitary group called the Gelfand-Tsetlin basis. We will discuss this basis in more detail below. It turns out that one can use Young diagrams to label irreps of the unitary group as well. One can convert the Dynkin label representation to a Young diagram representation in the following way: if the Dynkin labels are $(l_1, \dots l_r)$, then the corresponding partition $\lambda$ has components $\lambda_i=l_i+\dots +l_r$. Conversely, a Young diagram $\lambda$ that represents an irrep can be converted to Dynkin labels by setting $l_i=\lambda _i-\lambda_{i-1}$.

\subsection{Subgroup adapted bases}\label{subsec:subgroup_adapted_bases}
A subgroup adapted basis is a canonical basis for an irrep of a group $G$ that is obtained from a tower of subgroups $G_0=1\subset G_1\subset\dots G_n=G$ from the identity to $G$. To see how one obtains a canonical basis from a given subgroup tower, first consider an irrep $\rho$ of $G=G_n$. Suppose we restrict it to the subgroup $G_{n-1}$, then $\rho$ can be decomposed into irreps of $G_{n-1}$. Suppose that this restriction yields irreps $\sigma_i$ of $G_{n-1}$ each with multiplicity $m_i$, then choosing a basis for the multiplicity spaces and the irrep spaces would give us a basis for $\rho$. In choosing a basis for the $\sigma_i$, we can restrict to $G_{n-2}$ and so on down the subgroup tower. Finally, we would end up with the trivial subgroup and since it has only a one-dimensional irrep, this would fix the entire basis. In the special case that each of the restrictions from $G_i$ to $G_{i-1}$ are multiplicity-free i.e., $m_i$ are all zero or one, we get a canonical basis. In other words, there is no ambiguity in choosing the basis for the multiplicity space except for the choice of a phase for each multiplicity space.

In most applications, one makes a projective measurement in this basis, which makes this phase choice irrelevant. However, if one were to perform other quantum operations conditioned on basis vectors after the Schur transform, then the phase choice might be relevant. But in that case, one can incorporate that phase choice into the conditional operations to be performed. We will see next that there are special subgroup towers for both the symmetric and unitary groups that have multiplicity-free branching along the tower and hence lead to canonical bases.

For the symmetric group, the tower of subgroups $1=S_1\subset S_2\subset\dots S_n$, where $S_i$ permutes the first $i$ letters and fixes the remaining $n-i$ ones, gives a multiplicity-free branching rule from one subgroup to the next. This tower is fixed once we number the $n$ qudit registers in some fashion. The resulting canonical basis is called Young orthonormal basis (also Young-Yamanouchi basis). This basis can be associated to Young diagrams with numbers in the boxes with the rule that the numbers are strictly increasing as one goes from left to right along a row and top to bottom along a column. As mentioned above, such numbered Young diagrams are called \emph{standard Young tableaux} (SYT). An example is given below.
\begin{equation}
\lambda = \Yvcentermath1\young(134,256,7)\,.
\end{equation}
As is well-known, the symmetric group is generated by adjacent transpositions $(k,k+1)$. If $k$ and $k+1$ are in different rows and columns in $T$, the action of any such transposition on an SYT $T$ is given as
\begin{align}
&(k,k+1)\ket{T}=a^T_k\ket{T}+b^T_k\ket{(k,k+1)T}\,,\\
&(k,k+1)\ket{(k,k+1)T}=b^T_k\ket{T}-a^T_k\ket{(k,k+1)T}\,,
\end{align}
where $a^T_k$ is the inverse of the Manhattan distance in $T$ between the boxes labeled $k$ and $k+1$ and $b^T_k=\sqrt{1-(a^T_k)^2}$. The Manhattan distance between two boxes is the number of steps needed to go up plus number of steps to the right minus number of steps to the left minus number of steps down. It can be seen easily that this does not depend on the path taken. We use the notation $\ket{(k,k+1)T}$ to denote the SYT with $k$ and $k+1$ interchanged, which can be seen to be a SYT. If $k$ and $k+1$ are in the same row, they must be next to each other and the action is given as
\be
(k,k+1)\ket{T}=\ket{T}\,,
\ee
and if they are in the same column (and necessarily in adjacent rows) the action is 
\be
(k,k+1)\ket{T}=-\ket{T}\,.
\ee

For the unitary group, a subgroup tower that leads to a canonical basis is $1=U_1\subset U_2\subset \dots U_d$, where $U_i$ is the unitary group acting on the $i\times i$ minor of the full $d\times d$ matrix. This tower is determined once we fix a basis for each qudit register. This tower, like the one for the symmetric group, gives rise to multiplicity-free branching and hence to a canonical basis. This canonical basis is called the Gelfand-Tsetlin basis. Given any irrep $\lambda$ of $U_d$ in the form of a Young diagram, one can obtain the diagrams in the restriction to $U_{d-1}$ by removing a box from end of each column in all possible ways. If two columns have the same length in $\lambda$ and the choice is to remove a box from the left column, then one must also remove the box from the right column to ensure that a valid Young diagram is obtained. An example is given below.
\begin{equation}
\lambda = \Yvcentermath1\yng(3,3,1)\rightarrow \left\{\Yvcentermath1\yng(3,3,1), \Yvcentermath1\yng(3,3),\Yvcentermath1\yng(3,1,1), \Yvcentermath1\yng(3,1) \right\}\,.
\end{equation}
Suppose we pick one and proceed with a choice down the tower, we would have the following possibility.
\begin{equation}
\Yvcentermath1\yng(3,3,1)\rightarrow \Yvcentermath1\yng(3,3,1)\rightarrow\Yvcentermath1\yng(3,1)\rightarrow\Yvcentermath1\yng(2)\,.
\end{equation}
This sequence gives us a basis vector of the irrep $\lambda$. This can be encapsulated by putting numbers into the original irrep, which represent the stage before which the boxes are removed. For the above sequence the following numbering would hold.
\begin{equation}
\Yvcentermath1\young(112,233,3)\,.
\end{equation}
Notice that the rows are weakly increasing i.e., the numbers either increase or stay the same as we move right and the columns are strictly increasing. Such a Young diagram is called a semi-standard Young diagram (SSYT). SSYTs with numbers taken from the set $[d]$ label basis vectors in the irrep of $U_d$. We will see below that SSYTs also play a role in certain induced representations of the symmetric group called permutation modules. Efficient encodings of these bases are constructed in \cite{BCH07} i.e., using poly($\log d$, n, $\log 1/\epsilon$) bits. For a SSYT, the tuple that contains the number of boxes labeled by a given integer is called the \emph{content} of the SSYT. For instance, in the example above, the content is $(2,2,3)$ corresponding to $2$ boxes numbered one, $2$ boxes numbered two and $3$ boxes numbered three.

SSYTs have an interesting structure that is useful in our algorithms. If we consider all the boxes containing a specific number, we find that no two of them appear in the same column. If we isolate these boxes, such a \emph{skew} diagram is called a \emph{horizontal strip}. An SSYT can be thought of as being composed of horizontal strips. It turns out that this composition can be made more precise as we describe in the next subsection. An example of an SSYT and the associated horizontal strips are given below.

\begin{equation}
\Yvcentermath1\gyoung(112,233,3)\,,\quad \Yvcentermath1\gyoung(11)\,,\quad \Yvcentermath1\gyoung(::;2,;2)\,,\quad\Yvcentermath1\gyoung(:;3;3,;3)\,.
\end{equation}
In terms of horizontal strips, the above decomposition of an irrep $\lambda$ of $U(d)$ into irreps $\mu$ of $U(d-1)$ can be rephrased as the set of all $\mu$ that can be obtained from $\lambda$ by removing a horizontal strip in all possible ways.

\subsection{RSK algorithm and composition of Young tableaux}\label{subsec:RSK}
The discussion in this section is taken from the book by Fulton \cite{fulton1997young}. This procedure is used in the algorithm \DualSchur ~(step 4). For a more detailed explanation of how to obtain SYTs with permuted content, see \cite{fulton1997young} chapter 4. The RSK (Robinson-Schensted-Knuth) algorithm establishes a correspondence between pairs of words and pairs of tableaux. The main part of the RSK is a procedure called row insertion that lets one insert a letter into a tableau such that the resulting tableau is semi-standard with one more box. This correspondence has several applications, but the primary application here is to produce semi-standard Young tableaux where the content is permuted. In this subsection, we briefly describe this algorithm. The row insertion procedure takes as input a tableau $T$ and an integer $x$ and outputs a tableau with one more box than $T$. The procedure is as follows 
\begin{enumerate}
\item Find the number in the first row of $T$ that is greater than $x$. 
\item If there is none, then place $x$ in a box at the end of the first row.
\item If there are numbers greater than $x$ in the first row, let $y$ be the smallest among them. Place $x$ in $y$'s position ($x$ `bumps' $y$).
\item Repeat the previous steps with $y$ (in the place of $x$) and starting with the second row.
\end{enumerate}
The RSK algorithm uses the above row bumping procedure. It takes a pair of words, say $u=u_1u_2\dots u_r$ and $v=v_1v_2\dots v_r$ that has the following two properties as input. The first is that $u$ is weakly increasing and second if $u_{k-1}=u_k$, then $v_{k-1}\leq v_k$. Given such pairs as input the procedure produces a pair of tableaux ($P,Q$) iteratively as follows. Start with the base tableaux ${\scriptsize\Yvcentermath1\young(x)}$ and ${\scriptsize\Yvcentermath1\young(y)}$, where $x=v_1$ and $y=u_1$. Then from any pair ($P_{k-1},Q_{k-1}$), row insert $v_k$ into $P_{k-1}$ getting $P_k$. Then add a box to $Q_{k-1}$ in the position where the new box is in $P_k$ and put $u_k$ in this box.

This procedure allows us to define a product or composition of tableaux mentioned in the previous subsection. Suppose $S$ and $T$ are two tableaux, then $S\cdot T$ is defined as follows. If $T$ consists of only one box, then the product of $S$ and $T$ is the result of row insertion into $S$. If $T$ contains more than one box, then we row insert them one by one into $S$ starting from the bottom left box and moving left to right along each row and upwards along the rows. 

In this paper, we will need this procedure (in the algorithm \DualSchur ~step 4) to create an SSYT $V$ from another SSYT $U$ with the content permuted. In order to describe it for any permutation, we only need to show it for a single transposition. As noted above, an SSYT consists of horizontal strips each having the same number. Now suppose that the transposition is $(k,k+1)$ i.e., if the original SSYT $U$ contains $n_k$ boxes numbered $k$ and $n_{k+1}$ boxes numbered $k+1$, then the new SSYT $V$ should contain $n_{k+1}$ boxes numbered $k$ and $n_k$ boxes numbered $k+1$. This is done by using the product defined above. It turns out \cite{fulton1997young} that we can write $U=A\cdot B\cdot C$, where $A$ is a SSYT that contains only boxes numbered $1$ through $k$, $B$ is an SSYT that contains boxes numbered $k$ and $k+1$ and $C$ is an SSYT that contains the remaining numbers. Since $B$ contains only two labels, it must be of the form
\Yboxdim{1cm}
\begin{center}
\begin{tikzpicture}[scale=0.7]
\tgyoung(4.5cm,0.2cm,k_2\hdts kk_2\hdts kl_2\hdts l,l_2\hdts l)
\draw [
    thick,
    decoration={
        brace,
        mirror,
        raise=0.5cm
    },
    decorate
] (8.7,0.5) -- (12.5,0.5)
node [pos=0.5,anchor=north,yshift=-0.55cm] {s};
\draw [
    thick,
    decoration={
        brace,
        mirror,
        raise=0.5cm
    },
    decorate
] (12.6,0.5) -- (16.5,0.5)
node [pos=0.5,anchor=north,yshift=-0.55cm] {t};
\end{tikzpicture}
\end{center}
where $l=k+1$ for brevity in the figure and the overhang consists of $s$ boxes numbered $k$ and $t$ boxes numbered $l$. When we swap the number of $k$s and $l$s, we obtain a similar diagram where the overhang contains $t$ $k$s and $s$ $l$s. Denote this diagram by $B^\prime$. 
\Yboxdim{1cm}
\begin{center}
\begin{tikzpicture}[scale=0.7]
\tgyoung(4.5cm,0.2cm,k_2\hdts kk_2\hdts kl_2\hdts l,l_2\hdts l)
\draw [
    thick,
    decoration={
        brace,
        mirror,
        raise=0.5cm
    },
    decorate
] (8.7,0.5) -- (12.5,0.5)
node [pos=0.5,anchor=north,yshift=-0.55cm] {t};
\draw [
    thick,
    decoration={
        brace,
        mirror,
        raise=0.5cm
    },
    decorate
] (12.6,0.5) -- (16.5,0.5)
node [pos=0.5,anchor=north,yshift=-0.55cm] {s};
\end{tikzpicture}
\end{center}

Now the new SSYT $V$ is obtained by composing $A\cdot B^\prime\cdot C$ using the RSK algorithm.

\subsection{Gelfand-Tsetlin bases}\label{subsec:GT}
An alternate way of representing the basis vectors of the unitary group is the so called Gelfand-Tsetlin (GT) patterns. GT patterns are useful in certain applications, although one can easily convert an SSYT to a GT pattern and vice versa. The power of GT patterns comes from the fact that in the GT basis, one can write the matrix elements of the Lie algebra \SU(d) as derived by Gelfand and Tsetlin \cite{GT}. The formulae in this section can be found in the book by Vilenkin and Klimyk \cite{Klymik}.

A GT pattern $M$ is a triangle of numbers such as the one below.
\begin{equation}
M=\begin{pmatrix}
m_{1,d} & & m_{2,d} & & \dots & & m_{d,d}\\
 & m_{1,d-1}& &  \dots & & m_{d-1,d-1} &\\
&& \ddots &&\reflectbox{$\ddots$}\\
&&m_{1,2}&m_{2,2}\\
&&\qquad \qquad \quad m_{1,1}
\end{pmatrix}\,.
\end{equation}
These numbers satisfy the \emph{in betweenness} condition
\begin{equation}
m_{k,l}\geq m_{k,l-1}\geq m_{k+1,l}\,,\qquad 1\leq k<l\leq d\,.
\end{equation}
The numbers in the first row of the GT pattern correspond to the number of boxes in each row of the corresponding SSYT. The number of boxes in the SSYT with the number $l$ in row $k$ is $m_{k,l}-m_{k,l-1}$. The total number of boxes with the number $l$ is therefore the difference of the row sums $\sum_k (m_{k,l}-m_{k,l-1})$ (where $m_{k,l}=0$ if $k>l$). More systematically, in order to convert a GT pattern to an SSYT, we start from bottom-most row of the GT pattern and create a partial SSYT with one row of $m_{1,1}$ boxes labelled $1$. Next, we add $m_{1,2}-m_{1,1}$ boxes to this row labelled $2$ and put $m_{2,2}$ boxes in the second row labelled $2$. Continuing in this way, we add $m_{1,l}-m_{1,l-1}$ boxes to the first row labelled $l$, $m_{2,l}-m_{2,l-1}$ boxes in row two labelled $l$ etc. The in-betweenness conditions guarantee that the skew tableau with boxes labelled $l$ is a \emph{horizontal strip} i.e., a skew tableau with at most one box in each of its columns. Now, to convert from an SSYT to a GT pattern, we can use the fact that the number of boxes labelled $l$ in the $k^{th}$ row is $m_{k,l}-m_{k,l-1}$ and fill in the $k^{th}$ diagonal. 

The generators of the Lie algebra of \SU(d) are defined as follows.
\begin{align}
&J_0^{(l)} = \frac{1}{2}(E^{l,l}-E^{l+1,l+1})\,,\\
&J_{+}^{(l)}=E^{l,l+1}\,,\\
&J_{-}^{(l)}=E^{l+1,l}\,,
\end{align}
where $E^{k,l}$ is the matrix with a one in the $(k,l)^{th}$ position and zeros everywhere else and $1\leq l\leq d-1$. The matrix elements of these generators can be expressed in the GT basis. The action of these elements on a basis vector corresponding to a GT pattern $\ket{M}$ is given by
\begin{align}\label{Eq:GTbasis_action}
&J_0^{(l)}\ket{M} = \left[\sum_{k=1}^l m_{k,l} - \frac{1}{2}(\sum_{k=1}^{l+1} m_{k,l+1} + \sum_{k=1}^{l-1}m_{k,l-1})\right] \ket{M}\,,\\
&\bra{M+\delta_{k,l}}J_{+}^{(l)}\ket{M} = \left( -\frac{\Pi_{k'=1}^{l+1}(m_{k',l+1}-m_{k,l}+k-k')\Pi_{k'=1}^{l-1}(m_{k',l-1}-m_{k,l}+k-k'-1)}{\Pi_{k'=1,k'\neq k}^{l}(m_{k',l}-m_{k,l}+k-k')(m_{k',l}-m_{k,l}+k-k'-1)}\right)^{1/2}\,,\nonumber\\
&\bra{M-\delta_{k,l}}J_{-}^{(l)}\ket{M} = \left( -\frac{\Pi_{k'=1}^{l+1}(m_{k',l+1}-m_{k,l}+k-k'+1)\Pi_{k'=1}^{l-1}(m_{k',l-1}-m_{k,l}+k-k')}{\Pi_{k'=1,k'\neq k}^{l}(m_{k',l}-m_{k,l}+k-k'+1)(m_{k',l}-m_{k,l}+k-k')}\right)^{1/2}\,.\nonumber
\end{align}
Here $\delta_{k,l}$ is a triangle of numbers like a GT pattern with zeros everywhere and a one in the $k^{th}$ diagonal and $l^{th}$ row. It is not a valid GT pattern on its own. In the above formulae, only those $M\pm\delta_{k,l}$ are considered that are valid GT patterns.

\subsection{Schur-Weyl duality}
Schur-Weyl duality refers to the fact that the actions of the symmetric group and the unitary group on $V^{\otimes n}$ are full centralizers  of each other. A good description of Schur-Weyl duality can be found in the book by Goodman and Wallach \cite{goodman2000representations}. In terms of representations, this can be written as follows. Suppose we pick the representation of the symmetric group on $V^{\otimes n}$ and block diagonalize it to obtain irreps as follows
\begin{equation}
V^{\otimes n} \cong \bigoplus_\lambda W_\lambda\otimes V_\lambda\,,
\end{equation}
where $\lambda$ runs over all the irreps of the symmetric group, $V_\lambda$ is the irrep space of the irrep $\lambda$ and $W_\lambda$ is the multiplicity space on which the symmetric group acts trivially. Now consider the following action of the unitary group: $U^{\otimes n}$ i.e., the diagonal action where the same unitary acts on each copy of the vector space $V$. This action clearly commutes with the action of the symmetric group. This means that in terms of representations, $W_\lambda$ is a representation of the unitary group for each $\lambda$. Schur-Weyl duality essentially asserts that $W_\lambda$ is not just a representation, but rather an \emph{irreducible representation} of the unitary group. Stated in yet another way, the same unitary transformation that block diagonalizes the symmetric group representation into irreps also block diagonalizes the unitary group representation into irreps. In order to write this in terms of a basis, let us first pick a basis for $V$ as $\{\ket{1},\dots ,\ket{d}\}$. Then a basis for $V^{\otimes n}$ is the set $\{\ket{i_1,\dots ,i_n}\}$, where $i_k\in [d]$. The basis after block diagonalization can be written as $\ket{\lambda, i, j}$, where $\lambda$ labels the symmetric (or unitary) group irrep, $i$ is an index for a basis of the symmetric group irrep and $j$ indexes the unitary group irrep basis. In terms of this, the (strong) Schur transform can be defined as the unitary transformation that changes the basis from the computational one i.e., $\ket{i_1,\dots ,i_n}$ to the block diagonal one $\ket{\lambda, i, j}$.

The label $i$ of the unitary group irrep is essentially a GT pattern or equivalently a SSYT and the label $j$ is a SYT since it corresponds to the Young-Yamanouchi basis element of the symmetric group. The GT basis of the unitary group irrep $\lambda$ consists of a highest weight given by $(\lambda_1,\dots,\lambda_d)$ and the highest weight vector is represented by the SSYT of shape $\lambda$ and content is also $\lambda$ i.e., its first row contains all ones, second row is all twos etc. In fact, the weight of any basis vector is the content of the SSYT. Therefore, all the basis vectors corresponding to SSYTs of a fixed content $\mu$ are degenerate and the multiplicity of this weight space is $K_{\lambda,\mu}$. The correspondence between the Kostka number $K_{\lambda,\mu}$ and the multiplicity of the weight space with content $\mu$ can be shown combinatorially (see for instance the book by Stanley \cite{Stanley}).

\subsection{Permutation modules of the symmetric group}

Now, let us look at so called \emph{permutation modules} of the symmetric group, which are useful in understanding the structure of the $S_n$ representation on the space $V^{\otimes n}$. For a more careful treatment, see the book by Sagan \cite{Sag13}. First, define the \emph{type} of any $n$-tuple $E=(e_1,e_2,\dots ,e_n)$ with $e_i\in [d]$ to be an $n$-tuple $\mathcal{T}(E)=(t_1,\dots,t_n)$, where $t_i$ is the number of occurrences of $i$ in $E$. Clearly, $\sum_i t_i=n$ and so corresponding to any $E$ is a partition $\mu(E)$. Given a type $\mathcal{T}$, denote by $W(\mathcal{T})$, the set of all $n$-tuples of that type. This set can be obtained by starting with the tuple $E_0(\mathcal{T})=(1,\dots,1,2,\dots,2,\dots ,n,\dots,n)$, where there are $t_i$ elements labeled $i$ and then applying all possible permutations to it. Now, a permutation module corresponding to $\mathcal{T}$ is the representation of $S_n$ on the vector space with basis as the set $W(\mathcal{T})$. This basis comprises of vectors of the form $\ket{E}=\ket{e_1,e_2,\dots ,e_n}$, where $\mathcal{T}(E)=\mathcal{T}$. Let $\mu=\mu(\mathcal{T})$ denote the associated partition i.e., the tuple obtained by arranging the non-zero elements of $\mathcal{T}$ in decreasing order. 

It turns out that the representation of $S_n$ described above is an induced representation. It is induced from the trivial representation of a particular subgroup to the full group $S_n$. This subgroup is denoted $Y_\mathcal{T}$ and called a \emph{Young} subgroup. The Young subgroup is the stabilizer of $E_0$ i.e., all possible permutations of $S_n$ that preserve $E_0$. So the permutation module is $P(\mathcal{T})=\Ind{Y_\mathcal{T}}{S_n} \triv$. It turns out that the representation of $S_n$ on $V^{\otimes n}$ is just the direct sum of the permutation modules $P(T)$, where the sum is over all possible types $\mathcal{T}$. These permutation modules are reducible in general and decompose into irreps $\lambda$ of $S_n$. However, not all irreps appear in this decomposition. Only those $\lambda$ which dominate $\mu(\mathcal{T})$ (in a certain ordering defined below) appear in the decomposition. Their multiplicities are called Kostka numbers and are denoted $K_{\lambda\mu}$. The dominance order on the irreps or Young diagrams is the following. A Young diagram or a partition $\lambda$ is said to dominate $\mu$ if $\lambda_1+\dots +\lambda_k$ $\geq \mu_1+\dots +\mu_k$ for all $k\geq 1$.

Let us now look at the structure of the multiplicity space of any irrep in a permutation module. We would like to understand this space and its basis since, in the dual version of the Schur transform, this space leads to the irrep space of the unitary group. As mentioned earlier, the dimension of the multiplicity of $\lambda$ in the permutation module of the partition $\mu$ is $K_{\lambda\mu}$. This space has a basis in terms of semi-standard Young tableau (SSYT) of shape $\lambda$ and content $\mu$ (both of which are partitions of $n$) i.e., a Young diagram of shape $\lambda$ filled with $\mu_1$ ones, $\mu_2$ twos etc., such that the numbers are strictly increasing in the columns and weakly increasing in the rows.\remove{For example, if $\lambda=(2,2)$ and $\mu=(2,1,1)$, then ${\scriptsize\Yvcentermath1\young(11,23)}$ is a SSYT.} As a special case, when $\lambda=\mu$, we have $K_{\lambda,\lambda}=1$. In other words, there is only one SSYT with content and shape given by the same Young diagram. For $\lambda=(2,2)$, it is ${\Yboxdim{13pt}\Yvcentermath1\young(11,22)}$ and it turns out that such SSYTs lead to highest weight vectors in the Gelfand-Tsetlin basis of the unitary group.

\section{Quantum Fourier transforms}\label{Sec:QFTs}
\subsection{Precision of quantum transforms}\label{subsec:precision}
The precision of a unitary operator can be defined as follows. Given a target unitary $V$, $U$ is called an approximation to a precision $\epsilon$ if
\be
\sup_{\ket{\psi}\neq 0}\frac{||(U-V)\ket{\psi}||}{||\ket{\psi}||}\leq \epsilon,
\ee
where $||\ket{\psi}||$ is the norm of the state $\ket{\psi}$. It can be shown \cite{kitaevASP} that a computation consisting of a sequence of $L$ $\epsilon$-approximate unitaries followed by a measurement that has a error probability $\delta$ has an overall error probability $\leq \delta + 2L\epsilon$.

When one has a $m\times m$ unitary matrix whose entries can be efficiently computed, one can use the Solovay-Kitaev theorem to $\epsilon$-approximate it by a sequence of gates from a universal gate set using $O(m^2\log^c(m^2/\epsilon))$ elementary operations. For a constant sized $m$, this is efficient in $\log(1/\epsilon)$. As we will see below, the QFT over the symmetric group $S_n$ can be done in time O(poly($n,\log(1/\epsilon$)).

\subsection{QFT over the symmetric group}\label{subsec:QFTSn}
Although implicit in steps of Beals' algorithm \cite{Beals97}, the dependence on the precision $\epsilon$ is not written explicitly. Here we show that it is poly($\log(1/\epsilon)$). We briefly explain the steps in Beals' algorithm for a quantum Fourier transform over $\mathbb{C}(S_n)$ and the labeling of the Fourier basis used in it. The algorithm proceeds by reducing each element of the symmetric group into a product of adjacent transpositions. The set of adjacent transpositions $\{(12),(23),\dots ,(n-1n)\}$ generate the group and hence any element can be written as a product of adjacent transpositions. Beals' algorithm uses subgroup adapted bases, strong generating sets with small adapted diameter (techniques that have been generalized to several other groups in \cite{MRR06}). By inductively constructing the Fourier transform on the subgroup tower $\{S_n, S_{n-1},\dots, S_1\}$, the algorithm converts from the group basis $\ket{g}$ to the basis $\ket{\lambda, i,j}$. The indices $i$ and $j$ label the multiplicity space and irrep space, which are of the same dimension since this is the regular representation. Therefore, they are both labeled by SYTs defined above. 

They can also be labeled by paths in the \emph{Bratelli} diagram, which is a rooted tree with nodes at each level $n$ corresponding to all the inequivalent irreps of $S_n$. In this tree, there are edges between a node or irrep at level $n$ (say $\rho$) and a node or irrep at level $n-1$ (say $\sigma$) if $\sigma$ is contained in the restriction of $\rho$ to $S_{n-1}$. The multiplicity of this edge is equal to the multiplicity of $\sigma$ in $\rho$. We will use this algorithm as a subroutine below. We will denote a QFT over $S_n$ as QFT$(S_n)$ in the following and a QFT over any Young subgroup $Y_\mathcal{T}$ by QFT$(Y_\mathcal{T})$.

The main steps in the algorithm can be summarized as proceeding from $S_1$ to $S_n$ along the tower,
\begin{enumerate}
\item Embed an irrep of $S_k$ into an irrep of $S_{k+1}$.
\item Apply the unitary $\rho(t)$, where $\rho$ is an irrep of $S_{k+1}$ and $t$ is a transversal i.e., an element of $S_{k+1}/S_k$.
\item Sum over all cosets of $S_k$ in $S_{k+1}$.
\end{enumerate}
Each of these steps involves applying a unitary transform that is sparse (as shown in \cite{Beals97}) with only a constant number of non-zero entries that can be calculated efficiently. Using standard results described in the previous subsection, we can approximate the QFT in poly($\log(1/\epsilon)$).

\subsection{Fourier transform over induced representations}
As mentioned earlier, the usual Fourier transform is a unitary operator, which changes the basis from the basis of group elements $\{\ket{g} \mid g\in G\}$ to the block diagonal form given in~\eqref{Eq:regular-rep}. In this subsection, we use this transform to construct a Fourier transform for induced representations i.e., a transform that block diagonalizes induced representations. It turns out that the Fourier transform for induced representations allows us to construct the dual Schur transform.

Suppose we have an induced representation from $H$ to $G$ of an irreducible representation $\sigma$ of $H$ i.e., we have $\Ind{H}{G} \sigma$. The computational basis for this space can be written as $\ket{t,v}$, where $t$ is an element of the transversal and $v$ is a basis vector in the representation space of $\sigma$. This induced representation is in general reducible as a representation of $G$ and can be decomposed as a sum of irreducible representations. The multiplicity of each irreducible representation $\rho$ in the decomposition is equal to, using Frobenius reciprocity, the multiplicity of $\sigma$ in the restriction of $\rho$ to $H$. We now describe how one can perform this transform.

The change of basis we want to implement is from the basis labeled  $\ket{t,v}$ to a block diagonal basis labeled, say $\ket{\lambda,i,j}$. Here $\lambda$ labels the irreps that appear in the decomposition and $i$ and $j$ are the multiplicity and the irrep space basis vectors. 
\begin{enumerate}
\item First, we append an ancilla register of size $|H|/\text{dim}(\sigma)$ (which is an integer) to the initial state so that we have a register of size $H$ (excluding the transversal register). We now embed $\ket{v}$ into this register of size $H$ as $\ket{\sigma,u,v}$, where $u$ labels the multiplicity space of dimension dim($\sigma$). 
\item Next, perform the inverse QFT over $H$ to get the group basis $\ket{h}$. So including the transversal register, we have $\ket{t,h}$. Now, write this as $\ket{g}$. For the symmetric group, it turns out that this is easy since both $t$ and $h$ are specified in terms of adjacent transpositions. However, if the group basis for certain groups is defined in a complicated way, then this transform might be non-trivial.
\item Now perform the QFT over $G$ to obtain the basis $\ket{\lambda,i,j}$. The label $i$ is the multiplicity index, which runs over the entire dimension of $\lambda$. However, for an induced representation, it would run over a smaller set in general. Similarly for the irrep label. If the irrep and the multiplicity index labels can be ordered in such a way that the ones that occur in the decomposition are higher in the ordering, then we can easily return the ancilla register and obtain a clean transform. For the goal of block diagonalization, this ordering is not so important. However, we will see that this can be done for the case of the symmetric group. 
\end{enumerate}

\subsection{Quantum Fourier transform over permutation modules}
As explained above, permutation modules are induced representations from the trivial representation of some Young subgroup $Y_\mathcal{T}$ to $S_n$. The Young subgroup $Y_\mathcal{T}$ can be regarded as the stabilizer group of some $n$ tuple $E=(e_1,\dots, e_n)$ of type $T$ with entries $e_i\in [d]$. This representation space is spanned by all possible permutations of $E$. In order to construct the Fourier transform over this space, we first re-write these vectors in a way that reflects the structure of the induced representation. In other words, we choose a transversal and think of elements of the transversal $\ket{t}$ as basis vectors. Now, we can use the algorithm to construct Fourier transforms over induced representations. For clarity, we make the steps involved more explicit here.

\begin{itemize}
\item First, take an ancilla of size $|Y_\mathcal{T}|$ (more precisely, $m=\lceil\log|Y_\mathcal{T}|\rceil$ qubits) with all the qubits in the  state $\ket{0}$. Here $|Y_\mathcal{T}|$ stands for the number of elements in the subgroup $Y_\mathcal{T}$. The state $\ket{0}^{m}$ is taken to be the label of the trivial irrep of $Y_\mathcal{T}$.
\item Then perform the inverse Fourier transform over $Y_\mathcal{T}$ to obtain the equal superposition over all the elements of $Y_\mathcal{T}$. This can now be thought of as a subspace of $\mathbb{C}(S_n)$ spanned by equal superpositions over elements of cosets of $Y_\mathcal{T}$ in $S_n$. 
\item On this space, we can perform the quantum Fourier transform over $S_n$. This produces the basis $\ket{\lambda,i,j}$ with $i$ and $j$ labeled by SYTs.
\end{itemize}
The set of irreps $\lambda$ that appear in the decomposition are the ones that dominate the Young diagram corresponding to $Y_\mathcal{T}$. Similarly, the multiplicity space, although embedded inside a space of dimension $d_\lambda$, does not have support on all the vectors. It will be supported only on $K_{\lambda\mu}$ many vectors. This space is the multiplicity of the trivial representation of the subgroup $Y_\mathcal{T}$ when $\lambda$ is restricted to $Y_\mathcal{T}$.

While the algorithm above does an essential block diagonalization, we would ideally like to have the multiplicity index label basis vectors of the right subspace rather than have it label the basis of a larger space. In order to do this, we need to change the basis inside multiplicity spaces to correspond to the trivial space under the action of the subgroup $Y_\mathcal{T}$. This follows from Frobenius reciprocity as mentioned earlier. It turns out this is also important to get to the Gelfand-Tsetlin basis in the dual Schur transform that we construct in the next section. To perform this base change in the multiplicity spaces, we can use generalized phase estimation (GPE) \cite{H05}. GPE is a generalization of Kitaev's phase estimation technique \cite{KSV02}. In this technique, one can block diagonalize any representation $\rho(g)$ if one can perform $\rho(g)$ with $g$ as control. The main reason why we need the GPE is to organize the multiplicity space into a basis that consists of vectors that transform trivially under $Y_\mathcal{T}$. We can use this primitive to re-organize the multiplicity space. 

The action of $Y_\mathcal{T}$ and $S_n$ in the multiplicity space is the so called \emph{right regular representation} $R$. This acts on Young tableau according to the Young orthonormal representation since for the right action of $S_n$, the multiplicity space is an irrep. The result of performing GPE on the basis vector ${\Yboxdim{13pt}\Yvcentermath1\young(12,34)}$ (say) for $Y_\mathcal{T}=S_2\times S_2$ would be to stabilize it. In other words, performing $(12)$ or $(34)$ does not affect the vector. This would then correspond to the SSYT ${\Yboxdim{13pt}\Yvcentermath1\young(11,22)}$. Some vectors do not appear in the multiplicity space of the trivial irrep of $Y_\mathcal{T}$ in GPE. For example, the vector ${\Yboxdim{13pt}\Yvcentermath1\young(13,24)}$ would be taken to zero by the action of $Y_\mathcal{T}$. These statements are proved later in generality. In order to attain this change of basis, we need to perform GPE. At the end of GPE, instead of standard Young tableau labeling the basis vectors of the multiplicity space, we would have $\lambda(Y_\mathcal{T})|T\rangle$, where $T$ is a SYT. Here 
\be
\lambda(Y_\mathcal{T})=\frac{1}{|Y_\mathcal{T}|}\sum_{g\in Y_\mathcal{T}}\lambda(g)\,,
\ee
and $\lambda(g)$ is the representation corresponding to the partition $\lambda$ evaluated at the group element $g$.

In order to have a controlled application inside the multiplicity space, we can first apply a controlled right multiplication and then apply the quantum Fourier transform. We can just combine GPE with the algorithm described above to get a decomposition of the multiplicity space. However, for completeness, we explicitly list out all the steps below. The algorithm for GPE and its performance guarantee are as follows \cite{H05}.
\begin{algobox}{0.9}
\GPEtxt$_\rho (G)$\label{alg:GPE}

\emph{Inputs:} A quantum state $\ket{\psi}$ in the representation space of $\rho$ of a group $G$.

\emph{Blackbox:} The ability to perform controlled multiplication in the representation $\rho$.

\emph{Outputs:} The outcome $\lambda$ of an irrep of $G$ with probability $p_\lambda = \bra{\psi}\Pi_\lambda\ket{\psi}$.

\emph{Runtime:} $2T_{QFT(G)}+T_{C_\rho}$, where $T_{QFT(G)}$ is the time to perform a QFT over the group $G$ and $T_{C_\rho}$ is the time to perform controlled multiplication in the representation $\rho$.
\begin{enumerate}
  \item Take an ancilla register consisting of $\lceil\log|G|\rceil$ qubits and create an equal superposition over elements of $G$.
  \item Perform $C_\rho=\sum_g \ket{g}\bra{g}\otimes \rho(g)$.
  \item Perform a QFT($G$) to the ancilla register.
  \item Measure the irrep label of the ancilla register.
\end{enumerate}
\end{algobox}

The black box in the above algorithm can be made explicit in the following algorithm. The representation $\rho$ turns out to be the usual right regular representation and can be efficiently implemented. The overall algorithm to block diagonalize permutation modules is the following. This algorithm is potentially applicable to other problems where one needs to block diagonalize induced representations and could be of independent interest. In the following, we will use \GPE ~as a unitary module without the measurement step (step 4 above). The algorithm \QFTPermMod~described next is essentially the quantum part of the algorithm \DualSchur~described in the next section. To make things clear, we have included a block diagram of the algorithm \QFTPermMod~in Fig. \ref{QFTPMod}.

\begin{algobox}{0.9}
\QFTPermModtxt$(Y_\mathcal{T})$\label{alg:QFTPermMod}

\emph{Inputs:} A quantum register $A$ with the computational basis given by elements of the transversal of $Y_\mathcal{T}$ in $S_n$.

\emph{Outputs:} Quantum registers $\ket{\lambda,i,j}$ corresponding to the block diagonalization of the induced representation of the trivial irrep of $Y_\mathcal{T}$ to $S_n$.

\emph{Runtime:} O(poly$(n,\log\epsilon^{-1})$).

\begin{enumerate}
  \item Take an ancilla register $B$ consisting of $\lceil\log|Y_\mathcal{T}|\rceil$ qubits and create equal superposition over elements of $Y_\mathcal{T}$.
  \item Perform GPE$_R(Y_\mathcal{T})$ on register $AB$ where $R$ is the right regular representation of $S_n$.
  \item Perform a QFT($S_n$) on the registers $AB$.
\end{enumerate}
\end{algobox}

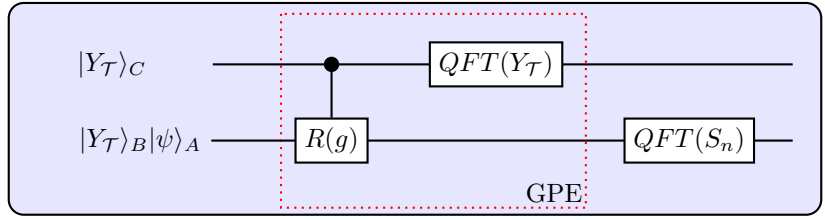
\begin{figure}
\centering
    \begin{tikzpicture}[thick]
    \tikzstyle{operator} = [draw,fill=white,minimum size=1.5em] 
    \tikzstyle{control} = [draw,fill,shape=circle,minimum size=5pt,inner sep=0pt]
    \tikzstyle{target} = [draw,shape=circle,minimum size=5pt,inner sep=0pt]
    \tikzstyle{neg_control} = [draw,shape=circle,fill=white,minimum size=5pt,inner sep=0pt]
    \tikzstyle{surround} = [fill=blue!10,thick,draw=black,rounded corners=2mm]
    \matrix[row sep=0.4cm, column sep=0.8cm] (circuit) {
    \node (q1) {$\ket{Y_\mathcal{T}}_C\hspace{0.75cm}$}; &[-0.5cm] &
    \node[control] (P12) {}; & 
    \node[operator] (P13) {$QFT(Y_\mathcal{T})$}; & 
    &[-0.3cm]
    \coordinate (end1); \\
     
    \node (q2) {$\ket{Y_\mathcal{T}}_B\ket{\psi}_A$};  &&
    \node[operator] (P23) {$R(g)$}; & &
    \node[operator] (P22) {$QFT(S_n)$}; &
    \coordinate (end2);\\
    };
    \node [draw=none,fill=none,below] at ($(P22.south)+(0.3,-0.3)$) (new) {};
    \node [draw=none,fill=none,below] at ($(circuit.north east)-(0cm,1.4cm)$) (right) {};
    \node [draw=none,fill=none,below] at ($(q1)+(-1.5,0.7)$) (q1p) {};
    \begin{pgfonlayer}{background}
        \node[surround] (background) [fit = (q1p) (new) (right)] {};
        \draw[red,thick,dotted] ($(P12.north west)+(-0.6,0.6)$)  rectangle ($(P13.south east)+(0.3,-1.6)$);
        \node at ($(P13.south east)+(-0.12,-1.4)$) {GPE};
        \draw[thick] (q1) -- (end1)  (q2) -- (end2) (P12) -- (P23);
    \end{pgfonlayer}
    \end{tikzpicture}\caption{Circuit diagram of \QFTPermMod. Here $\ket{Y_\mathcal{T}}$ stands for an equal superposition over the elements of the subgroup $Y_\mathcal{T}$, the registers $B$ and $C$ contain the state $\ket{Y_\mathcal{T}}$, register $A$ contains an arbitrary state $\ket{\psi}$ with support on the transversal. The dotted red box shows the part that is GPE.}
    \label{QFTPMod}
    \end{figure}

\begin{theorem}\label{theorem:QFTPermMod}
The above algorithm \QFTPermMod ~performs a block diagonalization of the permutation module in time O(poly$(n,\log\epsilon^{-1})$).
\end{theorem}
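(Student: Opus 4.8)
The plan is to prove two things: that the three steps of \QFTPermMod\ compose to the advertised change of basis (correctness of the block diagonalization), and that the circuit uses $O(\mathrm{poly}(n,\log\epsilon^{-1}))$ elementary gates. Since every step is linear and the permutation-module action on the input register $A$ is the coset action of~\eqref{eq:induced-representation}, it suffices to follow one transversal basis vector $\ket{t}$ — equivalently a coset $tY_\mathcal{T}$, equivalently an $n$-tuple of type $\mathcal{T}$ with associated partition $\mu$ — through the algorithm. Step~1 implements the standard embedding of the permutation module into the group algebra: adjoining register $B$ in the state $|Y_\mathcal{T}|^{-1/2}\sum_{h\in Y_\mathcal{T}}\ket{h}$ and reading $\ket{t}_A\ket{h}_B$ as $\ket{th}\in\C[S_n]$ sends $\ket{t}\mapsto|Y_\mathcal{T}|^{-1/2}\sum_h\ket{th}$. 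I would first verify that this map is $S_n$-equivariant — it intertwines the coset action on $P(\mathcal{T})=\Ind{Y_\mathcal{T}}{S_n}\triv$ with left multiplication on $\C[S_n]$ — and that its image is precisely the subspace of $\C[S_n]$ fixed by right multiplication by $Y_\mathcal{T}$, thereby realizing the $S_n$-module isomorphism $P(\mathcal{T})\cong(\C[S_n])^{Y_\mathcal{T}\text{-right}}$.

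Steps~2 and~3 then carry out a simultaneous block diagonalization of the two commuting actions on $\C[S_n]$: left multiplication by $S_n$ and right multiplication restricted to $Y_\mathcal{T}$. QFT$(S_n)$ (step~3) resolves the left action, taking $\ket{g}\mapsto\ket{\lambda,i,j}$ with $j$ the Young orthonormal (SYT) index of the irrep $V_\lambda$ and $i$ the index of the multiplicity register, on which the commuting right action lives; GPE$_R(Y_\mathcal{T})$ (step~2) resolves that right action, and since the two actions commute, performing GPE before the QFT has the same net effect as performing QFT$(S_n)$ and then GPE on the $i$-register — namely it decomposes the SYT basis of each multiplicity register into its $Y_\mathcal{T}$-isotypic components. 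The embedded state from step~1 lies entirely in the trivial-$Y_\mathcal{T}$ isotype, so GPE's irrep-label ancilla emerges deterministically in the label of $\triv(Y_\mathcal{T})$, disentangled — it is discarded — and what remains of the multiplicity register is the $Y_\mathcal{T}$-invariant subspace, spanned by the normalized projections $\lambda(Y_\mathcal{T})\ket{T}$. By Frobenius reciprocity this subspace has dimension $\dim\Hom_{Y_\mathcal{T}}(\triv,\Res{S_n}{Y_\mathcal{T}}V_\lambda)=[\Ind{Y_\mathcal{T}}{S_n}\triv:V_\lambda]=K_{\lambda\mu}$, which is nonzero exactly when $\lambda\trianglerighteq\mu$. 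Composing the three steps gives $\ket{t}\mapsto\ket{\lambda,i,j}$ with $\lambda\trianglerighteq\mu$, $j$ over SYT of shape $\lambda$, and $i$ over a $K_{\lambda\mu}$-element index set, intertwining the permutation-module action with $\bigoplus_\lambda V_\lambda\otimes\triv^{\oplus K_{\lambda\mu}}$ — which is exactly the asserted block diagonalization. (As a consistency check, summing dimensions recovers $\sum_\lambda d_\lambda K_{\lambda\mu}=|S_n|/|Y_\mathcal{T}|=\dim P(\mathcal{T})$.)

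For the runtime I would bound each step in turn. Creating $|Y_\mathcal{T}|^{-1/2}\sum_{h\in Y_\mathcal{T}}\ket{h}$ is an inverse QFT over the Young subgroup $Y_\mathcal{T}\cong S_{\mu_1}\times S_{\mu_2}\times\cdots$, which factors as a tensor product of the $S_{\mu_k}$-Fourier transforms, each of cost $\mathrm{poly}(n,\log\epsilon^{-1})$ by the analysis of Beals' algorithm in Section~\ref{subsec:QFTSn}; the same bound applies to QFT$(S_n)$ in step~3. The black box inside GPE, $C_R=\sum_{g\in Y_\mathcal{T}}\ket{g}\bra{g}\otimes R(g)$, acts on basis states as a controlled permutation by a product of adjacent transpositions, hence is sparse with efficiently computable entries, so by Section~\ref{subsec:precision} it also costs $\mathrm{poly}(n,\log\epsilon^{-1})$; the total GPE cost is $2T_{QFT(Y_\mathcal{T})}+T_{C_R}$. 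Every register here has size $O(n\log n)$, and the $\epsilon$-errors of the $\mathrm{poly}(n)$-many subroutines accumulate linearly by Section~\ref{subsec:precision}, for an overall cost $O(\mathrm{poly}(n,\log\epsilon^{-1}))$.

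The step I expect to require the most care is the analysis of GPE in step~2. GPE is stated as a routine that \emph{measures} an irrep label, whereas here it is used coherently as a unitary module, so one must justify that on the trivial-$Y_\mathcal{T}$ isotypic subspace — where step~1 has placed the state — the GPE ancilla emerges disentangled and in a fixed (trivial) label, that GPE therefore acts there as a well-defined partial isometry, and that it commutes with QFT$(S_n)$ in the way used above; one must also identify the resulting basis of the multiplicity register with the orthonormalized family $\{\lambda(Y_\mathcal{T})\ket{T}\}$ indexed by $K_{\lambda\mu}$ standard Young tableaux. The finer combinatorics — which SYT survive GPE and how they match SSYT of shape $\lambda$ and content $\mu$ — is not needed for the block-diagonalization claim itself and can be deferred to the analysis of \DualSchur.
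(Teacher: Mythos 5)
Your proposal is correct and follows essentially the same route as the paper's proof: embed the permutation module into $\C[S_n]$ by adjoining the $Y_\mathcal{T}$-uniform ancilla, apply \GPE$_R(Y_\mathcal{T})$ to isolate the trivial isotype of the commuting right $Y_\mathcal{T}$-action, then QFT$(S_n)$ for the left action, with Frobenius reciprocity giving the multiplicity $K_{\lambda\mu}$. The structural difference is one of packaging: the paper carries out Parts (i)--(ii) by tracking amplitudes explicitly through the circuit (summing over $h_1\in Y_\mathcal{T}$ to force $\mu=\triv$, then showing the multiplicity register is rotated to $\lambda(Y_\mathcal{T})\ket{T}$), whereas you reach the same conclusions at the module level via equivariance of the embedding and commutativity of the left/right actions; your version is cleaner and makes the determinism of the GPE ancilla more transparent. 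The more substantive divergence is your deferral of the paper's Part~(iii) -- the bijection between the nonzero vectors $\lambda(Y_\mathcal{T})\ket{T}$ and SSYTs of shape $\lambda$ and content $\mu$, which rests on Lemma~\ref{lemma:horizontal_strip}. You are right that the bare block-diagonalization claim only needs the dimension count from Frobenius reciprocity. But be aware that in the paper Part~(iii) is the load-bearing output of this theorem: it is what equips \QFTPermMod's multiplicity register with the concrete SSYT labels that step~4 of \DualSchur hands to the RSK procedure, and the Gelfand--Tsetlin lemma later cites ``the proof of theorem~\ref{theorem:QFTPermMod}'' for exactly this identification. So a proof that defers it must actually discharge Lemma~\ref{lemma:horizontal_strip} (and the ``nonzero when horizontal strips'' direction) before \DualSchur uses it; also, a small slip, the surviving vectors $\lambda(Y_\mathcal{T})\ket{T}$ are naturally indexed by SSYTs (equivalently a $K_{\lambda\mu}$-element subset of SYTs), not by $K_{\lambda\mu}$ arbitrary SYTs, and they are linearly independent but not orthonormal without further work.
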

\begin{proof}
The proof can be broken into three parts.
\begin{enumerate} 
\item We show that for states of the form $\sum_t a_t \ket{t}$, performing GPE and measuring the irrep of $Y_\mathcal{T}$ would always give the trivial irrep. 
\item We show that when the trivial irrep of $Y_\mathcal{T}$ appears in the measurement, the computational basis of the multiplicity space is rotated from SYTs to (the normalized version of) $\lambda(Y_\mathcal{T})|T\rangle$, where $T$ is a SYT. 
\item Then we show that the states $\lambda(Y_\mathcal{T})|T\rangle$ are in one-to-one correspondence to SSYTs whenever the states are non-zero. 
\end{enumerate}

\emph{Part (i)}

To show that we always get the trivial irrep of $Y_\mathcal{T}$ for states of the form $\sum_t a_t \ket{t}$, where $t$ is the transversal of $Y_\mathcal{T}$ in $S_n$, we track the state through the steps of the algorithm.
\begin{enumerate}[leftmargin=*]
\item Suppose we had the state $\sum_t a_t\ket{t}^A$, where $t$ is an element of the transversal of $Y_\mathcal{T}$ in $S_n$. We take an ancilla register $B$ consisting of $\log|Y_\mathcal{T}|$ qubits to get $\sum_t a_t\ket{t}^A\ket{0}^B$.
  
\item Perform inverse QFT over $Y_\mathcal{T}$ on the $B$ register to obtain an equal superposition over group elements of $Y_\mathcal{T}$. This allows us to view the registers $A$ and $B$ together as the group basis of $S_n$. This step takes O(poly$\log|Y_\mathcal{T}|$) time since the QFT over $Y_\mathcal{T}$ can be done efficiently for Young subgroups (as they are direct products of symmetric groups). We now have the state $\sum_{t,h} b_t\ket{t}^A\ket{h}^B$. Here $h$ runs over all the elements of $Y_\mathcal{T}$ and $b_t=a_t/\sqrt{|Y_\mathcal{T}|}$.

\item Perform GPE, which consists of the following steps. This takes O(poly$\log|Y_\mathcal{T}| +$ poly($n$)) time since controlled $R$ operations can be done in poly $n$ time and there are two QFTs over $Y_\mathcal{T}$.
\begin{itemize}[leftmargin=*]
\item Take a register $C$ of $\log|Y_\mathcal{T}|$ qubits initialized to $\ket{0}$ and obtain an equal superposition over $Y_\mathcal{T}$ in it. This gives us the state 
\begin{equation}
\sum_{t,h_1,h_2} c_t\ket{t}^A\ket{h_1}^B\ket{h_2}^C\,,
\end{equation}
where $h_1$ and $h_2$ run over all elements of $Y_\mathcal{T}$ and $c_t=b_t/\sqrt{|Y_\mathcal{T}|}$.
\item Conditioned on the register $C$, perform a controlled right multiplication on the group basis of $S_n$ in registers $AB$ i.e., perform 
\begin{equation}
\sum_{h\in Y_\mathcal{T}}\ket{h}\bra{h}^C\otimes R(h)^{AB}\,,
\end{equation}
where $R$ is the right multiplication in $S_n$. This gives the state 
\be
\sum_{t,h_1,h_2} c_t\,R(h_2)^{AB}\,(\ket{t}^A\ket{h_1}^B)\,\ket{h_2}^C\,.
\ee
This state can be rewritten as 
\be
\sum_{t,h_1,h_3} c_t\,(\ket{t}^A\ket{h_3}^B)\,\ket{h_1^{-1}h_3}^C\,,
\ee
where we have replaced $h_1h_2$ as $h_3$.

\item Perform QFT($Y_\mathcal{T}$) on $C$. This gives us the following state 
\begin{equation}
\sum_{\mu,k,l,h_1,h_3,t} \frac{\sqrt{d_\mu}}{|Y_\mathcal{T}|}\,[\mu(h_1^{-1}h_3)]_{k,l}\, c_t\,\ket{t,h_3}^{AB}\ket{\mu,k,l}^C\,,
\end{equation}
where $\mu$ runs over all the irreps of $Y_\mathcal{T}$ and $k$ and $l$ run over its dimension. The sum over $h_1$ forces $\mu$ to be the trivial irrep of $Y_\mathcal{T}$. Thus, for states of the form $\sum_t a_t \ket{t}$, we will always get the trivial irrep when we measure $\mu$.

\end{itemize}

\end{enumerate}

\emph{Part (ii)}\\
If we had done a QFT over $S_n$ and not performed GPE before that, we would have had the basis of the multiplicity space labeled by SYTs. Here we show that, after performing GPE and when $\mu$ (the irrep label of $Y_\mathcal{T}$) is trivial, the basis of SYTs is rotated to $\lambda(Y_\mathcal{T})\ket{T}$, where $T$ is a SYT of shape $\lambda$. Starting with a state of the form $\ket{th_1}$ and performing GPE would give us the state
\be
\sum_{\mu,k,l,h_2} \frac{\sqrt{d_\mu}}{|Y_\mathcal{T}|}\,[\mu(h_2)]_{k,l}\,R(h_2)\ket{t,h_1}^{AB}\ket{\mu,k,l}^C\,,
\ee
Next we perform a QFT over $S_n$ on this state to get the following state
\be\label{Eq:GT_basis}
\sum_{h_2,\lambda,\mu,T_1,T_2,k,l}\frac{\sqrt{d_\mu d_\lambda}}{|Y_\mathcal{T}|n!}[\lambda(th_1)]_{T_1,T_2}\,[\mu(h_2)]_{k,l}(\ket{\lambda}\otimes \lambda(h_2)\ket{T_1}\otimes\ket{T_2})^{AB}\ket{\mu,k,l}^C \,,
\ee
where $T_1$ and $T_2$ are SYTs. Since the right regular representation acts only on the first register, we have the $\lambda(h_2)$ acting only on $\ket{T_1}$. When $\mu$ is trivial the basis state $\ket{T_1}$, which corresponds to a SYT gets taken to $\lambda(Y_\mathcal{T})\ket{T_1}$.

\emph{Part (iii)}\\
We show next that $\lambda(Y_\mathcal{T})\ket{T}$ can be identified with semistandard Young tableau of shape $\lambda$ and content defined by $Y_\mathcal{T}$. In other words, if $Y_\mathcal{T}=S_{X_1}\times S_{X_2}\times\dots \times S_{X_k}$ for some $k$, then the content is $|X_1|$ $1$s, $|X_2|$ $2$s etc. Also note that the sets $X_i$ consist of consecutive integers. The SSYT associated with $\lambda(Y_\mathcal{T})\ket{T}$ is the one where all the integers in $X_i$ are replaced by $i$. This is a valid SSYT if no two integers in $X_i$ are in the same column or equivalently, every column has at most one element of $X_i$. If we isolate the boxes in the SYT labeled by elements of $X_i$ and they have the property above, such a skew Young diagram is called a \emph{horizontal strip} as described in section \ref{subsec:subgroup_adapted_bases}. 

We now only need to show that for every $X_i$ that if the boxes numbered with elements of $X_i$ form a horizontal strip, then $\lambda(Y_\mathcal{T})\ket{T}$ is non-zero and it is zero otherwise. To show this, we focus on a single $X_i$ and show that if there are two elements of $X_i$ in the same column, then $\lambda(X_i)\ket{T}=0$. This is done next in lemma~\ref{lemma:horizontal_strip}. Finally, the claim of the dependence on $\epsilon$ follows from the fact the each of the steps in the algorithm (including the group multiplications and quantum Fourier transforms) can be done with O(poly$\log\epsilon^{-1}$) elementary gates based on the results described in section~\ref{subsec:precision} and~\ref{subsec:QFTSn}.

\end{proof}
\begin{lemma}\label{lemma:horizontal_strip}
Let $A$ be a set of consecutive integers $\{a_1,\dots a_k\}$, $S_A$ be the symmetric group of size $|A|!$ permuting the elements of $A$. Let $\ket{T}$ be a SYT of shape $\lambda$ with entries that include the set $A$. Let $\lambda(S_A)=\sum_{\pi}\lambda(\pi)$, where $\pi$ runs over elements of $S_A$. Assume that $T$ contains two elements of $A$ in the same column. Then, $\lambda(S_A)\ket{T}=0$, where $\lambda(\pi)$ is the Young orthogonal representation on SYTs.
\end{lemma}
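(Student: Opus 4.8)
The plan is to reduce the claim to a statement about a single adjacent transposition, using the fact that if $T$ has two elements of $A$ in the same column, then (since $A$ is a set of \emph{consecutive} integers) some pair of consecutive integers $m, m+1 \in A$ lies in the same column of $T$. Indeed, read off the entries of $A$ appearing in that shared column from top to bottom as $a_{i_1} < a_{i_2} < \dots$; consecutive entries in this list need not be consecutive as integers, but if \emph{no} pair of consecutive integers from $A$ shared a column we could, column by column, reorder and conclude that the $|A|$ boxes numbered by $A$ form a horizontal strip --- contradicting the hypothesis. So it suffices to find $m$ with $m, m+1$ in the same column and exploit that.

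Next I would use the known action of the adjacent transposition $(m, m+1)$ in the Young orthogonal representation, stated in Section~\ref{subsec:subgroup_adapted_bases}: when $m$ and $m+1$ lie in the same column of a SYT $S$, we have $(m,m+1)\ket{S} = -\ket{S}$. Now factor the group sum $\lambda(S_A) = \sum_{\pi \in S_A} \lambda(\pi)$ through the subgroup $\{e, (m,m+1)\} \cong S_2$: writing $S_A$ as a disjoint union of right cosets $\sigma\{e,(m,m+1)\}$, we get
\be
\lambda(S_A) = \left(\sum_{\sigma} \lambda(\sigma)\right)\bigl(\lambda(e) + \lambda((m,m+1))\bigr) = \left(\sum_{\sigma}\lambda(\sigma)\right)\bigl(\one + \lambda((m,m+1))\bigr)\,.
\ee
Hence it is enough to show $\bigl(\one + \lambda((m,m+1))\bigr)\ket{T} = 0$, i.e. $\lambda((m,m+1))\ket{T} = -\ket{T}$. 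But that is exactly the same-column action rule applied to $\ket{T}$ itself, since $m$ and $m+1$ occupy the same column of $T$ by construction. Therefore $\lambda(S_A)\ket{T} = 0$.

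The one genuine subtlety --- the step I expect to need the most care --- is the very first reduction: justifying that a shared column forces a shared column for some \emph{consecutive pair} of elements of $A$. This is where the consecutiveness of $A$ is essential (it fails for a general subset), and I would phrase it as a short combinatorial lemma: if the boxes of $T$ labeled by $A$ do \emph{not} form a horizontal strip, then two of them lie in one column, and among all such violating columns pick the entries $a < b$ in $A$ closest together; if $a, b$ are not consecutive integers, then some $c \in A$ with $a < c < b$ lies in a different column, and one checks (using standardness, i.e. strict increase down columns and rows) that then $c$ together with $a$ or $b$ gives a closer violating pair in some column --- a contradiction. Once this is in place the rest is the coset-averaging argument above, which is routine. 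An alternative, perhaps cleaner, route avoids the minimality argument entirely: observe that the subgroup of $S_A$ generated by all adjacent transpositions $(j, j+1)$ with $j, j+1$ both in $A$ is all of $S_A$, and it suffices to exhibit \emph{one} such adjacent transposition acting by $-1$ on $\ket{T}$; but to do that we still need some consecutive pair sharing a column, so the combinatorial lemma is unavoidable in one form or another.
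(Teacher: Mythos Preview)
Your reduction step is false, and this is a genuine gap, not a detail. The claim ``if two elements of $A$ share a column of $T$, then some \emph{consecutive} pair $m,m+1\in A$ shares a column'' already fails for $A=\{1,2,3\}$ and the tableau $T={\Yboxdim{13pt}\Yvcentermath1\young(12,3)}$: here $1$ and $3$ share column~$1$, but $2$ sits at position $(1,2)$, so neither $(1,2)$ nor $(2,3)$ lies in a common column. Your minimality argument breaks exactly here: with $a=1$, $b=3$, the intermediate $c=2$ shares a column with neither $a$ nor $b$, so no ``closer violating pair'' is produced. Consequently there is no adjacent transposition in $S_A$ acting as $-1$ on $\ket{T}$, and the coset factorization $\lambda(S_A)=\big(\sum_\sigma\lambda(\sigma)\big)(\one+\lambda((m,m+1)))$ cannot be invoked. (One can check directly that $\lambda(S_3)\ket{T}=0$ nonetheless, but not for the reason you give.)

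What \emph{is} true, and what the paper uses, is the weaker reduction: since columns are strictly increasing and $A$ is an interval, one can find $i,j\in A$ in the same column and in \emph{consecutive rows} --- but $j$ need not equal $i+1$. The paper then handles the base case $j=i+1$ exactly as you do (factoring out $e+(i,i+1)$), and for $j>i+1$ argues by induction on $j-i$: using the Young orthogonal formulas it writes $\ket{T}$ as $(A_ke+B_k(k,k+1))\ket{(k,k+1)T}$ for a suitably chosen $k$ between $i$ and $j$, observes that $(k,k+1)$ commutes with $S_A$, and iterates until the resulting tableau has $i$ and $j-1$ in the same column, where the inductive hypothesis applies. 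Your coset-averaging idea is correct and is indeed the engine of the base case; what is missing is this inductive bridge from ``same column, consecutive rows'' to ``same column, consecutive integers''.
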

\begin{proof}
Since the elements of $A$ are consecutive integers, we can assume without loss of generality that there are two elements of $A$ that appear in the same column and in consecutive rows. Let the elements be $i$ and $j$ with $j>i$. We first show that if $j=i+1$, then $\lambda(S_A)\ket{T}=0$ and then reduce the general case to this one.

So assume now that $i$ and $i+1$ are in the same column (they have to be in consecutive rows). The action of the transposition $(i,i+1)$ is $\lambda((i,i+1))\ket{T}=-\ket{T}$. Therefore, $\lambda(e+(i,i+1))\ket{T}=0$, where $e$ is the identity element. It is easy to see that for any set $K=\{i_1,i_2\dots i_r\}$, the symmetric group algebra element that is a sum of all possible permutations of the elements of $K$ can be written as follows.
\be
S_K=((i_r,i_1)+\dots+(i_r,i_{r-1}))\dots ((i_3,i_1)+(i_3,i_2)+e)\,((i_2,i_1)+e)
\ee
While this factorization is dependent on the ordering of the elements, the overall group algebra element $S_K$ is independent of it. Using this and writing $A=\{i,i+1,i+2,\dots , a_k,a_1,a_2,\dots,i-1\}$, we obtain
\be
S_A=S_A^\prime \, ((i,i+1)+e)\,,
\ee
where in $S_A^\prime$, we collect the rest of the terms i.e., $S_A^\prime$ is a product of sums of transpositions coming from the factorization above. It is now easy to see that $S_A\ket{T}=0$ if $T$ contains $i$ and $i+1$ in the same column. For the general case, assume inductively that when $i$ and $j-1$ are in the same column, then $S_A\ket{T}=0$. Now, suppose that $i$ and $j$ are in the same column and in consecutive rows.


Consider the element $k$, where $k$ is the largest element between $i$ and $j$ such that $k$ is in a different row from $j$ and all elements between $k$ and $j$ are in the same row as $j$. For example, if $j-1$ is in a different row from $j$, then $k=j-1$. This, in particular, means that $k$ and $k+1$ are in different rows. If they are in the same column, then we are done. So assume that they are not in the same column. Then we have
Let $(k,k+1)$ be a transposition and let $\ket{(k,k+1)T}$ be the SYT with $k$ and $k+1$ interchanged. Since they are not in the same row or column, $(k,k+1)T$ is also a SYT. We have
\begin{align}
&(k,k+1)\ket{T}=a^T_k\ket{T}+b^T_k\ket{(k,k+1)T}\,,\\
&(k,k+1)\ket{(k,k+1)T}=b^T_k\ket{T}-a^T_k\ket{(k,k+1)T}\,,
\end{align}
where $a^T_k$ is the inverse of the Manhattan distance in $T$ between $k$ and $k+1$ and $b^T_k=\sqrt{1-(a^T_k)^2}$. Using these equations, we have
\be
\ket{T}=(A e + B(k,k+1))\ket{(k,k+1)T}\,,
\ee
where $A_k=\frac{1}{b^T_k}$ and $B_k=\frac{a^T_k}{b^T_k}$. Therefore, we have
\be
S_A\ket{T}=S_A(A_ke+B_k(k,k+1))\ket{(k,k+1)T}=(A_ke+B_k(k,k+1))S_A\ket{(k,k+1)T}\,,
\ee
where the last equality follows from the fact that both $e$ and $(k,k+1)$ commute with $S_A$. Now note that in $(k,k+1)T$, $k+1$ and $k+2$ are not in the same row or column. Continuing this process we get
\begin{align}
S_A\ket{T}=&(A_ke+B_k(k,k+1))(A_{k+1}e+B_{k+1}(k+1,k+2))\dots(A_{j-1}e+B_{j-1}(j-1,j))\nonumber\\
&S_A\ket{(j-1,j)\dots (k,k+1)T}\,.
\end{align}
It is easy to see that $\ket{(j-1,j)\dots (k,k+1)T}$ is a SYT where $i$ and $j-1$ are in the same column and consecutive rows. By the induction hypothesis, we have $S_A\ket{(j-1,j)\dots (k,k+1)T}=0$.

\end{proof}

\section{Dual algorithm for the Schur transform}\label{Sec:Dual_Schur_transform}
We are now ready to describe our dual algorithm for the Schur transform using the above tools. It involves essentially two main steps. The first is a block diagonalization into permutation modules and the second is a block diagonalization of each permutation module into irreps using the above transform. The algorithm is as follows.
\begin{algobox}{0.9}
\DualSchurtxt$(n,d,\epsilon)$\label{alg:DualSchur}

\emph{Inputs:} A quantum register $A$ with the computational basis given by $n$-tuples $\ket{e_1,\dots , e_n}$, where $e_j\in [d]$.

\emph{Outputs:} Quantum registers $\ket{\lambda,i,j}$, where $\lambda$ is the irrep label of the symmetric or unitary groups, $i$ is the irrep label of the symmetric group in the Young orthonormal  basis and $j$ is the irrep register of the unitary group in the Gelfand-Tsetlin basis.

\emph{Runtime:} O(poly$(n, \log d, \log \epsilon^{-1})$).

\begin{enumerate}
\item Map the entries of any basis vector that are greater than $n$ to entries inside $n$ and keep track of this mapping. For example, $\ket{e}=\ket{e_1,\dots,e_n}$ is now $\ket{p_e,\tilde{e}}$, where $\tilde{e}$ is a vector with entries in $[n]$ and $p_e$ is the map.
\item Convert $\ket{\tilde e}$ to $\ket{\mathcal{T},t}$, where $\mathcal{T}$ is the type and $t$ is the transversal element of the subgroup $Y_\mathcal{T}$ in $S_n$.
\item Conditioned on $\mathcal{T}$, apply \QFTPermMod ~to $\ket{t}$ and obtain the basis $\ket{\lambda, i,j}$, where $j$ is an SSYT with entries in $[d]$ and $i$ is an SYT.
\item The basis at this point is $\ket{\lambda, i,(\mathcal{T},j)}$. Use RSK to convert the pair $(\mathcal{T}, j)$ to a SSYT as described in section \ref{subsec:RSK}.
\end{enumerate}
\end{algobox}

\begin{theorem}
Given an $n$ fold tensor product of $d$ dimensional Hilbert spaces and accuracy $\epsilon$, the quantum algorithm \DualSchur ~runs in time O(poly $\log d, n, \log 1/\epsilon$) and performs the strong Schur transform i.e., performs the change of basis from the computational basis to the block diagonal basis $\ket{\lambda, i, j}$, where $\lambda$ is the symmetric or unitary group irrep, $i$ labels the Young-Yamanouchi basis vector in the symmetric group and $j$ labels the Gelfand-Tsetlin basis vector in the unitary group irrep.
\end{theorem}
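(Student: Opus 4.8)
The plan is to walk through the four steps of \DualSchur, verifying at each stage that the basis labels and the two group actions come out as advertised, and then to add up the gate counts; the representation-theoretic content is carried by Theorem~\ref{theorem:QFTPermMod}, so most of the work is bookkeeping.

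Steps~1 and~2 are reversible classical relabelings of the computational basis. In step~1, since at most $n$ of $\{1,\dots,d\}$ occur among $e_1,\dots,e_n$, one records the (easily computed and inverted) map $p_e$ and rewrites $\ket{e_1,\dots,e_n}=\ket{p_e}\ket{\tilde e}$ with $\tilde e\in[\min(n,d)]^n$; this is what makes the cost depend on $\log d$ rather than $d$, and I would postpone to the end the check that un-doing it — reconciling the content ordering of the final tableau with the true type of $e$, which is the role of the tableau product of Section~\ref{subsec:RSK} invoked in step~4 — returns the genuine $U(d)$ Gelfand--Tsetlin basis vectors with entries in $[d]$. Step~2, $\ket{\tilde e}\mapsto\ket{\mathcal T}\ket t$ with $\mathcal T$ the type and $t\in S_n/Y_{\mathcal T}$, is $S_n$-equivariant (the $S_n$-action permutes tensor positions and fixes the multiset of values, so it acts by left multiplication on the coset register) and realizes $(\C^d)^{\otimes n}=\bigoplus_{\mathcal T}\C[S_n/Y_{\mathcal T}]=\bigoplus_{\mathcal T}P(\mathcal T)$, the decomposition into permutation modules from Section~\ref{Sec:Rep_theory_background}.

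For step~3 I would quote Theorem~\ref{theorem:QFTPermMod}: conditioned on $\mathcal T$, \QFTPermMod block-diagonalizes $P(\mathcal T)=\Ind{Y_{\mathcal T}}{S_n}\triv$ as $\bigoplus_\lambda V_\lambda\otimes M_{\lambda,\mathcal T}$, with the symmetric-group register in the Young--Yamanouchi basis (the last transform inside \QFTPermMod is Beals' QFT over $S_n$, which uses that basis) and $M_{\lambda,\mathcal T}$ carrying a basis indexed by the semistandard Young tableaux of shape $\lambda$ whose content is read off from the block sizes of $Y_{\mathcal T}$ — this is parts~(ii)--(iii) of the proof of Theorem~\ref{theorem:QFTPermMod} together with Lemma~\ref{lemma:horizontal_strip}, and Frobenius reciprocity identifies $\dim M_{\lambda,\mathcal T}=K_{\lambda\mu(\mathcal T)}$ as it must. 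Step~4 applies the content-permuting tableau bijection of Section~\ref{subsec:RSK} (composing transpositions, each a bijection on SSYT of fixed shape, hence unitary) to land on an SSYT of shape $\lambda$ with content the true type of $e$ and entries in $[d]$. Summing over $\mathcal T$, the multiplicity space of $V_\lambda$ in $(\C^d)^{\otimes n}$ is identified, as a space with a distinguished basis, with the span of all SSYT of shape $\lambda$ with entries in $[d]$; by Schur--Weyl duality this is the irrep $W_\lambda$ of $U(d)$, whose dimension matches, and the distinguished basis is labeled the way the Gelfand--Tsetlin basis is.

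The hard part, and the point explicitly flagged in the abstract's footnote, is to argue that this basis of the $j$-register is the Gelfand--Tsetlin basis \emph{itself}, not just a basis sharing its labels. I would do this by checking it is adapted to the tower $U(1)\subset\cdots\subset U(d)$: the block $P(\mathcal T)$ in which a computational vector of content $\tau$ lands is exactly the $U(d)$-weight-$\tau$ space of $(\C^d)^{\otimes n}$, so a constructed vector carrying SSYT $P$ is a $U(d)$-weight vector of weight $\mathrm{content}(P)$; restricting $W_\lambda$ to $U(d-1)$ amounts to deleting the horizontal strip of $d$-boxes from $P$, which is precisely how the SSYT-to-GT-pattern dictionary of Section~\ref{subsec:GT} and Lemma~\ref{lemma:horizontal_strip} organize the vectors, and iterating this down the tower pins them down up to one phase per line, absorbable into subsequent conditional operations as in Section~\ref{subsec:subgroup_adapted_bases}. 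One also checks that the ancillae of steps~1--3 (relabeling data, type, and the GPE registers, the last of which ends in the fixed trivial-irrep label) are disentangled at the end, since each is a function of the output $(\lambda,i,j)$ and can be uncomputed. For the runtime: steps~1, 2 and~4 are reversible classical manipulations of permutations and tableaux of size $O(n)$ with $O(\log d)$-bit entries, costing $\mathrm{poly}(n,\log d)$ gates; step~3 costs $\mathrm{poly}(n,\log\epsilon^{-1})$ by Theorem~\ref{theorem:QFTPermMod}, unchanged by the control on $\mathcal T$ since QFT over $S_n$ is $\mathcal T$-independent and QFT over $Y_{\mathcal T}$ factors over its at most $n$ symmetric-group blocks. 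Writing $L=\mathrm{poly}(n,\log d,\log\epsilon^{-1})$ for the total gate count, taking each elementary unitary to precision $\epsilon/(2L)$ and applying the accumulation bound of Section~\ref{subsec:precision} gives overall error $\le\epsilon$ with only a $\mathrm{poly}(\log\epsilon^{-1})$ overhead, which is the stated bound.
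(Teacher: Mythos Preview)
Your proof follows the same route as the paper: verify the runtime of each step and then argue the output basis is Gelfand--Tsetlin by showing it is adapted to the unitary subgroup tower $U(1)\subset\cdots\subset U(d)$. The runtime analysis and the bookkeeping for steps~1--4 are fine, and your attention to ancilla disentanglement and error accumulation is, if anything, more careful than the paper's.

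The gap is in the GT-basis argument. You correctly observe that each constructed vector is a $U(d)$-weight vector (it is supported on a single permutation module, hence on a single content), but weight-vector-ness alone does not pin down the GT basis: weight spaces in $W_\lambda$ are generally multi-dimensional, and most bases of a weight space are not subgroup-adapted. What must actually be shown is that, for each fixed horizontal $d$-strip, the span of the constructed vectors whose SSYT label carries that $d$-strip is $U(d-1)$-invariant (and then iterate down the tower). Your appeal to ``the SSYT-to-GT-pattern dictionary of Section~\ref{subsec:GT} and Lemma~\ref{lemma:horizontal_strip}'' does not supply this: Section~\ref{subsec:GT} describes how GT patterns are \emph{labeled}, and Lemma~\ref{lemma:horizontal_strip} tells you when the symmetric-group averaging annihilates a SYT, but neither says anything about how $U(d-1)$ acts on the constructed vectors $\ket{\lambda,i,j}$. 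The paper closes exactly this gap in the lemma immediately following the theorem, by computing the action of the Lie-algebra generators $J_+^{(l)}$ (for $l\le d-2$) and $J_0^{(l)}$ directly on the coset-sum expression for $\ket{\lambda,i,j}$ and checking that the result is a combination of states whose SSYT label differs from $j$ only in the boxes marked $l$ and $l{+}1$, so that the $d$-strip is preserved. That explicit computation, or an equivalent argument linking the $U(d-1)$-action on the tensor space to the combinatorics of the SSYT label produced by \QFTPermMod, is the missing ingredient in your sketch.
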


\begin{proof}
To prove the runtime claim, we describe the steps in more detail with an example and bound the run time.
\begin{enumerate}
\item Given a basis vector in the computational basis, first we map the entries of the vector that are greater than $n$ to entries inside $n$ and keep track of the map. So a basis vector $\ket{e}=\ket{e_1,\dots,e_n}$ becomes $\ket{p_e,\tilde{e}}$, where $\tilde{e}$ is a vector with entries in $[n]$ and $p_e$ is the map. This map need not be global and can be specific to the vector $\ket{e}$. This can be done in poly$(n)$ steps. As a running example, let us consider $n=5$, $d=10$ and take the vector $\ket{5, 5, 10, 3, 9}$. In the first step, this gets mapped to $\ket{10\rightarrow 1, 9\rightarrow 2}\otimes\ket{5,5,1,3,2}$. This can be done in O(poly($n,\log d$)) steps.
\item Compute the type of the vector $\tilde{e}$ and the symmetric group element (as a product of a set of transpositions) needed to convert the standard basis vector i.e., where the entries appear in ascending order to $\tilde{e}$. The basis vector $\ket{e}$ is converted to $\ket{p_e,\mathcal{T},t}$, where $\mathcal{T}$ is the type and $t$ is the transversal element of the subgroup $Y_\mathcal{T}$ in $S_n$. For the example, we would have the standard basis vector as $\ket{1,1,2,3,4}$, the type is $(1,1,1,0,2)$ i.e., one $1$, one $2$, one $3$, zero $4$ and two $5$s. The transversal as a product of transpositions would be $(15)(52)(53)$. This takes poly$(n)$ time.
\item Recall that the register with $\ket{t}$ can be viewed as the induced representation of $Y_\mathcal{T}$ in $S_n$ i.e., a permutation module. Conditioned on the type $\mathcal{T}$, we can apply the Fourier transform for permutation modules to obtain the basis $\ket{p_e,\mathcal{T},\lambda, i,j}$. This step takes poly$(n,\log\epsilon^{-1})$ time.
\item Rewriting this as $\ket{\lambda, (p_e,\mathcal{T},i),j}$, we can convert $p_e,\mathcal{T},i$ into a SSYT by using the information in $p_e$ and $\mathcal{T}$ and rewriting the basis $i$, which consists of SSYT of shape $\lambda$ and content $\mathcal{T}$, to match the GT basis (we prove below that this is gives the GT basis).
\end{enumerate}
The claim of the dependence on $\epsilon$ follows from the fact the each of the steps in the algorithm can be done with O(poly$\log\epsilon^{-1}$) elementary gates  based on the results described in section~\ref{subsec:precision} and~\ref{subsec:QFTSn}. This shows that the overall runtime is O(poly($\log d, n, \log\epsilon^{-1}$)). 

To prove the theorem, we now need to prove that the basis obtained in the last step (labeled by SSYTs) is exactly the GT basis. The (unnormalized) basis states can be written in the form (from Eq. \ref{Eq:GT_basis})
\be
\ket{\lambda,p_e,\mathcal{T},i,j} = \sqrt{\frac{d_\lambda}{|G|/|Y_\mathcal{T}|}}\sum_g [\lambda(gY_\mathcal{T})]_{i,j}\ket{gY_\mathcal{T}}\,,
\ee
where $G$ is the symmetric group $S_n$, $Y_\mathcal{T}$ is a Young subgroup of the form $(S_{\mu_1}\times S_{\mu_2}\times \dots S_{\mu_k})$ for some $k\leq d$. This follows from the next lemma.
\end{proof} 

\begin{lemma}
Let $G$ be the symmetric group $S_n$ and $Y_\mathcal{T}$ be a Young subgroup corresponding to the type $\mathcal{T}$. Consider the unnormalized basis states written as 
\be
\ket{\lambda,p_e,\mathcal{T},i,j} = \sqrt{\frac{d_\lambda}{|G|/|Y_\mathcal{T}|}}\sum_g [\lambda(gY_\mathcal{T})]_{i,j}\ket{gY_\mathcal{T}}\,,
\ee
where the sum over $g$ is over the cosets of $Y_\mathcal{T}$ in $G$. When $\mathcal{T}$ runs over all possible types and $p_e$ runs over all possible subsets of $n$ letters from the alphabet $\{1,\dots, d\}$, the above set of vectors form the Gelfand-Tsetlin basis.
\end{lemma}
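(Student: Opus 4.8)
The plan is to show that the orthonormal basis produced by \DualSchur (it is orthonormal since it is obtained from the computational basis by a sequence of unitaries) is adapted to the unitary subgroup tower $1=U_1\subset U_2\subset\dots\subset U_d$ that defines the Gelfand--Tsetlin basis, with the SSYT labels tracking the chain of restrictions. I would argue in three stages: first fix the torus weights, then the tower structure, then conclude by a uniqueness argument.

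\emph{Weights.} First I would identify the $U(d)$-torus action. For a fixed type $\mathcal{T}$ over the alphabet $[d]$ (recovered from $p_e$ and $\mathcal{T}$ together), every computational basis vector $\ket{e_1,\dots,e_n}$ of the permutation module $P(\mathcal{T})$ is an eigenvector of $\mathrm{diag}(z_1,\dots,z_d)^{\otimes n}$ with eigenvalue $\prod_k z_k^{t_k}$ depending only on $\mathcal{T}$; hence $P(\mathcal{T})$ is exactly the weight-$\mathcal{T}$ subspace of $V^{\otimes n}$, so by Schur--Weyl duality the multiplicity space of $\lambda$ in $P(\mathcal{T})$ is precisely the weight-$\mathcal{T}$ subspace of the unitary irrep $W_\lambda$. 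Combining this with Theorem~\ref{theorem:QFTPermMod} and the RSK step of \DualSchur (Section~\ref{subsec:RSK}), the output vectors with fixed $\lambda$ are labeled by SSYTs $U$ of shape $\lambda$ with entries in $[d]$ whose content equals $\mathcal{T}$, and $\ket{\lambda,i,U}$ lies in the weight-$(\mathrm{content}\,U)$ subspace of $W_\lambda$ --- which is exactly the weight carried by the GT pattern $U$. In particular, whenever $K_{\lambda,\mathrm{content}(U)}=1$ the vector is already pinned down up to an (irrelevant) phase.

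\emph{Tower structure.} The remaining content --- needed only when $K_{\lambda\mu}>1$ --- is that the basis is adapted to the whole tower: for every $k$, the span of $\{\ket{\lambda,i,U}:\text{the sub-tableau of }U\text{ with entries }\le k\text{ has shape }\nu\}$ equals the isotypic component of $V^{\otimes n}$, viewed as a $U(k)$-module, belonging to the $U(k)$-irrep labeled by $\nu$. I would prove this by induction on $d$, the inductive step concerning the restriction $U(d)\downarrow U(d-1)$. Writing $\C^d=\C^{d-1}\oplus\C\ket{d}$, one has $V^{\otimes n}|_{U(d-1)}\cong\bigoplus_{S\subseteq[n]}(\C^{d-1})^{\otimes([n]\setminus S)}\otimes\ket{d}^{\otimes S}$, each summand a copy of $(\C^{d-1})^{\otimes(n-|S|)}$ to which Schur--Weyl for $U(d-1)$ applies; on the combinatorial side, ``removing from $U$ the horizontal strip of boxes labeled $d$'' is exactly passage to the shape $\nu$ of the sub-tableau with entries $\le d-1$, which is the GT restriction rule of Section~\ref{subsec:GT}. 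What must be checked is that \DualSchur intertwines these two pictures: that the multiplicity-space basis $\{\lambda(Y_\mathcal{T})\ket{T_1}\}$ of Theorem~\ref{theorem:QFTPermMod}, after the RSK content-permutation, restricts compatibly with the decomposition by the position-set $S$ of the $d$-labeled entries, so that the vectors with sub-tableau shape $\nu$ land precisely in the $\nu$-component for $U(d-1)$. The RSK step is essential here: it guarantees the $d$-labeled boxes form a horizontal strip identified correctly regardless of the order in which the parts of $\mathcal{T}$ occur.

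\emph{Conclusion and the hard step.} Iterating down the tower and using that each restriction $U(k)\downarrow U(k-1)$ is multiplicity-free (the in-betweenness / horizontal-strip condition), the simultaneous memberships ``$\ket{\lambda,i,U}$ lies in the $\nu^{(k)}$-isotypic component for $U(k)$, for all $k$'', with $\nu^{(k)}$ the shape of the entries $\le k$ in $U$, determine $\ket{\lambda,i,U}$ uniquely up to phase; matching this against the chain defining the GT pattern $U$ identifies our basis with the Gelfand--Tsetlin basis (the phase being immaterial, cf.\ Section~\ref{subsec:subgroup_adapted_bases}). I expect the main obstacle to be the intertwining claim of the second stage: showing that a construction carried out entirely with the symmetric-group tower and the permutation-module Fourier transform is nonetheless compatible with the unitary-group restriction $U(d)\downarrow U(d-1)$. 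Making this precise reduces to tracking how the SYT-to-SSYT horizontal-strip dictionary of Theorem~\ref{theorem:QFTPermMod} and the RSK product of Section~\ref{subsec:RSK} behave under freezing the tensor factors carrying the label $d$, and this is where the bulk of the combinatorial work lies.
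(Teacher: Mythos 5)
Your overall strategy — exhibit the basis as adapted to the tower $U(1)\subset U(2)\subset\dots\subset U(d)$ and then invoke the characterization of the Gelfand--Tsetlin basis as the (essentially unique) subgroup-adapted basis — is the same as the paper's, and your ``Weights'' stage is sound (it corresponds to the paper's computation of the diagonal Cartan action $J_0^{(l)}$ and fixes everything in the $K_{\lambda\mu}=1$ sectors). But the critical step, showing that the restriction $U(d)\downarrow U(d-1)$ is respected, is precisely what you leave unproved. You state the needed intertwining claim — that the $\{\lambda(Y_\mathcal{T})\ket{T_1}\}$-basis, after the RSK content relabeling, is compatible with the decomposition of $V^{\otimes n}|_{U(d-1)}$ by the position set $S$ of the $d$-labeled tensor factors — and then explicitly acknowledge that ``this is where the bulk of the combinatorial work lies'' without carrying it out. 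That leaves the proof incomplete at the only place where it is nontrivial.

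The paper closes this gap differently: rather than decomposing by the position set of the $d$-labeled factors and trying to match module isomorphisms combinatorially, it works infinitesimally. It computes the action of the $\mathfrak{su}(d-1)$ generators $J_0^{(l)}$, $J_{\pm}^{(l)}$ for $l\le d-2$ on the states $\ket{\lambda,i,j}=\sqrt{d_\lambda/(|G|/|H|)}\sum_g[\lambda(gH)]_{i,j}\ket{gH}$, showing that $J_{+}^{(l)}$ maps to a superposition of states whose Young subgroup is $H'$ (one more $l$, one less $l+1$) with coefficients $c_{g'}=[\lambda(g'JH)]_{i,j}$ where $J$ is an explicit sum of transpositions; combined with Lemma~\ref{lemma:horizontal_strip}, this shows the result is a superposition over SSYTs obtained by replacing a box labeled $l+1$ by $l$, which in particular leaves the $d$-horizontal strip untouched. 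Together with the $J_0^{(l)}$ eigenvalue calculation, this is a direct, checkable verification of tower-adaptation. Your alternative route via tensor-factor freezing could in principle work, but as written it restates what needs to be shown rather than proving it; to complete your argument you would need to track how $\lambda(Y_\mathcal{T})\ket{T_1}$ and the RSK product distribute across the $\bigoplus_S(\C^{d-1})^{\otimes([n]\setminus S)}\otimes\ket{d}^{\otimes S}$ decomposition, which looks at least as involved as the paper's generator calculation.
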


\begin{proof}
In order to prove this, we will verify that this basis is the subgroup adapted basis of the tower of subgroups $U(d)\supset U(d-1)\supset\dots \supset U(1)$. Equivalently, we will show that this basis makes the following decomposition block diagonal.
\be
V_\lambda\downarrow_{U(d-1)}^{U(d)}\simeq \bigoplus_{\mu} V_\mu\,, 
\ee
where $V_\lambda$ is the irrep of the unitary group $U(d)$ that corresponds to the shape $\lambda$ and $V_\mu$ is an irrep of $U(d-1)$ that corresponds to the shape $\mu$. The direct sum runs over all the shapes $\mu$ that differ from $\lambda$ by a horizontal strip. The shapes $\mu$ are all obtained from $\lambda$ by considering the various SSYTs that constitute the basis of the irrep space of the unitary group and removing the horizontal strips labelled $d$. This corresponds to all unitaries in $U(d)$ that fix the label $d$.

The (unnormalized) basis states can be written in the form 
\be
\ket{\lambda,i,j} = \sqrt{\frac{d_\lambda}{|G|/|H|}}\sum_g [\lambda(gH)]_{i,j}\ket{gH}\,,
\ee
where $G$ is the symmetric group $S_n$, $H$ is a Young subgroup of the form $(S_{\mu_1}\times S_{\mu_2}\times \dots S_{\mu_k})$ for some $k\leq d$. To avoid clutter, we replace $Y_\mathcal{T}$ above by $H$. We will call the tuple $\mu$, which can be permuted to a valid Young diagram, the content corresponding to $H$ since it is the content of the SSYTs that we obtain. The sum over $g$ runs over elements of the transversal of $H$ in $G$. With the embedding of the permutation module into the group algebra of $S_n$ done in the previous section, we can interpret $\ket{H}$ as the computational basis state $\ket{11\dots 1\, 2\dots 2\dots k\dots k}$ where there are $\mu_1$ $1$s, $\mu_2$ $2$s etc. The state $\ket{gH}$ as the basis state which is permuted according to the transversal element $g$. In the proof of theorem~\ref{theorem:QFTPermMod}, it was also shown the action of $H$ on the SYT $j$ leads to a SSYT. In order to show the above block diagonalization, we need to show that the action of $U(d-1)$ leaves the horizontal strip labeled $d$ intact. The action of the Lie algebra of $U(d-1)$ is generated by the operators $J_0^{(l)}, J_+^{(l)}$ and $J_-^{(l)}$ for $l=1,2,\dots,d-2$. These operators were defined in section \ref{subsec:GT}. We show that these operators acting on any $\ket{\lambda,i,j}$ leave the horizontal strip labeled $d$ intact. In other words, they give superpositions of states of the form $\ket{\lambda,i,j^\prime}$, where $j^\prime$ is a SSYT of shape $\lambda$ and whose horizontal strip labeled $d$ is the same as the one in $j$.

To show this, let us look at the action of $J_+^{(l)}$ for some $l$ in $[d-2]$. The action of $J_-^{(l)}$ is the Hermitian conjugate and so it suffices to look at $J_+^{(l)}$. We work with the unnormalized states given above.
\be
J_+^{(l)}\ket{\lambda,i,j} = J_+^{(l)}\sqrt{\frac{d_\lambda}{|G|/|H|}}\sum_g [\lambda(gH)]_{i,j}\ket{gH} = \sqrt{\frac{d_\lambda}{|G|/|H|}}\sum_{g^\prime} c_{g^\prime} \ket{g^\prime H^\prime}\,,
\ee
where $g^\prime$ runs over the transversal of $H^\prime$ in $G$ and $H^\prime$ is the Young subgroup that differs from $H$ in the number of $l$s and $l+1$s with one more $l$ and one less $l+1$ than $H$ i.e., if the content corresponding to $H$ is $\mu=(\mu_1,\dots ,\mu_k)$, then that of $H^\prime$ is $\mu^\prime=(\mu_1,\dots,\mu_l+1,\mu_{l+1}-1,\dots,\mu_k)$. The coefficient $c_{g^\prime}$ can be calculated to be
\be
c_{g^\prime}=[\lambda(g^\prime J H)]_{i,j}\,,
\ee
where the operator $J$ is defined as 
\be
J=\sum_{m=\sigma_l+1}^{\sigma_{l-1}+1} (\sigma_l+1, m)\,,
\ee
where $\sigma_l=\mu_1 +\dots +\mu_l$ and similarly $\sigma_{l+1}$. This shows that the new SSYT is a sum SSYTs obtained by taking a SYT $j$ and applying $H$ and an element of $J$. This means that from the SSYT $\tilde{j}$, we get a sum that involves SSYTs obtained by replacing some box labeled $l+1$ by $l$ as long as it gives a valid SSYT (since if two boxes labeled $l$ are in the same column, then by lemma~\ref{lemma:horizontal_strip}, that SSYT is taken to zero).

Now we show that the action of $J_0^{(l)}$ is as described in \ref{subsec:GT}. To see this note that $E^{l,l}$ acting on the computational basis counts the number of $l$s in the basis vector, which means that it counts the number of boxes labeled $l$ in the SSYT basis. Therefore since $J_0^{(l)}=(1/2)(E^{l,l}-E^{l+1,l+1})$, the action of $J_0^{(l)}$ on our basis is $(\mu_l-\mu_{l+1})/2$. Using the translation from SSYT to GT patterns described in section \ref{subsec:GT}, for a GT pattern $M$ defined as 
\begin{equation}
M=\begin{pmatrix}
m_{1,d} & & m_{2,d} & & \dots & & m_{d,d}\\
 & m_{1,d-1}& &  \dots & & m_{d-1,d-1} &\\
&& \ddots &&\reflectbox{$\ddots$}\\
&&m_{1,2}&m_{2,2}\\
&&\qquad \qquad \quad m_{1,1}
\end{pmatrix}\,,
\end{equation}
we have 
\be
\mu_l=\sum_{k=1}^{l} (m_{k,l}-m_{k,l-1})\,.
\ee
Thus the action of $J_0^{(l)}$ can be seen to be
\be
J_0^{(l)}\ket{M} = \left[\sum_{k=1}^l m_{k,l} - \frac{1}{2}(\sum_{k=1}^{l+1} m_{k,l+1} + \sum_{k=1}^{l-1}m_{k,l-1})\right] \ket{M}\,.
\ee
\end{proof}

\section{Conclusions}\label{Sec:Conclusions}
We have presented an efficient algorithm for a high dimensional Schur transform that runs in time O(poly($n$, $\log d$, $\log 1/\epsilon$)). This improves exponentially in the dimension over the prior work of Bacon, Chuang and Harrow \cite{BCH07}. As mentioned above, Harrow's thesis \cite{H05} contains a way to make the unitary group approach of \cite{BCH07} polynomial in $\log d$. Our algorithm is novel in that it uses the representation theory of the symmetric group rather than that of the unitary group. Another interesting feature is that it uses only the quantum Fourier transform and generalized phase estimation (which is also based on the QFT) and essentially no new tools. A potentially useful feature of this algorithm that could be a primitive for other problems is the circuit for a Fourier transform over induced representations. Several permutation modules, which are induced representations encode important problems that include element distinctness and collision finding. The subroutines to block diagonalize permutation modules could provide Fourier analytic algorithms to these problems and generalize to solve other problems which have permutational symmetry.

\section{Acknowledgments}
We would like to thank Aram Harrow for useful discussions and for explaining the algorithms in his thesis. We would like to thank John Wright for useful discussions and for comments on an earlier draft. We would also like to thank the anonymous referees of TQC for useful feedback. We are also grateful to the referees of Quantum for several suggestions that helped improve the paper.

\bibliography{Schur_refs}
\end{document}